\documentclass[11pt]{article}
\usepackage{amsmath}
\usepackage{graphicx,psfrag,epsf}
\usepackage{enumerate}
\usepackage{natbib}

\newcommand{\blind}{0}

\addtolength{\oddsidemargin}{-.5in}%
\addtolength{\evensidemargin}{-.5in}%
\addtolength{\textwidth}{1in}%
\addtolength{\textheight}{1.7in}%
\addtolength{\topmargin}{-1in}%

\usepackage{amssymb, amsthm, dsfont, xcolor, tikz, caption, subcaption, booktabs, threeparttable}
\usepackage{mathpazo} 
\definecolor{beamerdefault}{rgb}{0.2,0.2,0.7}
\usepackage[colorlinks=true,
linkcolor=beamerdefault,
anchorcolor=black,
citecolor=beamerdefault,
filecolor=black,
menucolor=black,
runcolor=black,
urlcolor=black]{hyperref}

\theoremstyle{definition}
\newtheorem{example}{Example}
\newtheorem{definition}{Definition}
\newtheorem{assumption}{Assumption}
\newtheorem{lemma}{Lemma}
\newtheorem{proposition}{Proposition}

\begin{document}

\def\spacingset#1{\renewcommand{\baselinestretch}%
{#1}\small\normalsize} 
\spacingset{1.4} 


\if0\blind
{
  \title{An Empirical Framework for Discrete Games with Costly Information Acquisition}
  \author{Youngjae Jeong\thanks{
    Department of Economics, The Ohio State University, Email: {\texttt{jeong.376@osu.edu}}}\hspace{.2cm}
    }
  \date{Draft: \today}
  \maketitle
} \fi

\if1\blind
{
  \bigskip
  \bigskip
  \bigskip
  \begin{center}
    {\LARGE\bf Title}
\end{center}
  \medskip
} \fi

\bigskip
\begin{abstract}
\noindent 
This paper develops a novel econometric framework for static discrete choice games with costly information acquisition.
In traditional discrete games, players are assumed to perfectly know their own payoffs when making decisions, ignoring that information acquisition can be a strategic choice.
In the proposed framework, I relax this assumption by allowing players to face uncertainty about their own payoffs and to optimally choose both the precision of information and their actions, balancing the expected payoffs from precise information against the information cost.
The model provides a unified structure to analyze how information and strategic interactions jointly determine equilibrium outcomes.
The model primitives are point identified, and the identification results are illustrated through Monte Carlo experiments.
The empirical application of the U.S airline entry game shows that the low-cost carriers acquire less precise information about profits and incur lower information costs than other airlines, which is consistent with their business model that focuses on cost efficiency.
The analysis highlights how differences in firms' information strategies can explain observed heterogeneity in market entry behavior and competition.
\end{abstract}

\noindent
{\bf Keywords:} Static discrete games, costly information acquisition, semiparametric identification. \\
{\bf JEL Classification:} C57, D83, L10.
\vfill

\newpage
\spacingset{1.5}
\section{Introduction}
\label{sec:intro}
In many economic environments, decision makers may not perfectly know their own payoffs and acquire information about the payoffs when making choices.
Such information acquisition is often costly and affects players' behavior.
Recent empirical works have documented this phenomenon in various contexts such as insurance choice \citep{brown2024}, laundry detergent choice \citep{joo2023}, and presidential election \citep{liao2024}.
Furthermore, experimental evidence shows that individuals tend to acquire costly information when facing payoff uncertainty \citep{dewan2020, dean2023, almog2024}.
These findings highlight the importance of incorporating costly information acquisition into economic models of decision making under payoff uncertainty.

This paper develops a novel econometric framework for static discrete choice games with costly information acquisition.
Previous discrete choice games with incomplete information (e.g., \citealt{bajari2010, aradillas2010}) typically assume that each player fully observes their own payoff but not their rivals' payoffs.
This paper relaxes this assumption and allows players to face uncertainty about their own payoffs and to acquire information about them.
The players optimally choose how much information to acquire about the payoffs, taking into account the costs, and thus determine the precision of their information.
This generalization provides a more realistic and flexible framework for analyzing strategic interactions under incomplete information.

Existing literature in industrial organization utilizes static discrete game models to analyze the strategic interactions between firms.
These models typically impose strong assumptions about players' information, ignoring the possibility that information acquisition itself can be a strategic choice.
For example, the models used to study firms' entry decisions, location decisions, and coordination \citep{BR1991, magnolfi2022, berry1992, ciliberto2009, koh2023, seim2006, sweeting2009} assume that players have either complete information or incomplete information about their own payoffs.
These assumptions on information overlook the fact that players may not perfectly know their own payoffs and may choose to acquire information strategically.

Unlike the previous discrete games with incomplete information, this paper models the information structure as the outcome of a strategic choice involving costly information acquisition.
Each player's information structure consists of signals about their own payoffs and distributions of these signals.
The precision of information is determined by the signal distributions.
Acquiring more precise signals is more costly, and each player considers the trade-off between the benefits of acquiring more precise information and the associated costs.
Thus, players strategically choose which signals to acquire and the distributions to maximize their expected payoffs net of the information costs.

Formally modeling costly information acquisition poses several challenges.
The domain of information structures is infinite-dimensional, which makes it difficult to characterize players' optimal choices and to derive equilibrium conditions. 
To address these issues, I adopt the framework of rational inattention \citep{matejka2015}. 
The theory of rational inattention provides a well-suited framework for analyzing the players' optimal strategies in games with costly information acquisition.
In particular, within this framework, players' optimal strategies yield a tractable characterization of Nash equilibrium \citep{Denti2023, yang2015, yang2020, montes2022}.
This tractability simplifies equilibrium analysis and leads to a feasible model estimation.

In equilibrium, the model captures heterogeneity in players' ability to acquire information through variation in the information strategies across players.
This contrasts with previous models that impose an exogenous information structure, where all players are assumed to have the same or arbitrarily specified information.
By allowing endogenous information acquisition, the model provides a unified framework to analyze how differences in information strategies, information costs, and player characteristics jointly determine equilibrium outcomes.

Next, I show that the model primitives are identified in both parametric and semiparametric settings.
The key assumption for the identification is the exclusion restriction.
That is, a player-specific variable affects only one player's payoff with no effect on the rival players' payoffs.
The exclusion restriction is not a strong assumption and is commonly used in the literature on discrete games \citep{Tamer2003, ciliberto2009, aradillas2010, grieco2014, magnolfi2022, koh2023}.
I illustrate both parametric and semiparametric identification results via Monte Carlo experiments.

Finally, as an empirical application, I use my framework to analyze the entry decisions of airlines in the U.S.
I specify a standard parametric linear payoff model and examine both the airlines' strategic behaviors and their costly information acquisition.
The results indicate that the airlines' baseline profits are negative, which creates an incentive for them to acquire information about their profits.
Furthermore, the results suggest that low-cost carriers tend to acquire less information than other airlines, which is consistent with their business models that focus on cost efficiency and simplified operations.
The analysis highlights how differences in firms' information strategies can explain observed heterogeneity in market entry behavior and competition.

This paper contributes to the literature on econometric discrete games \citep{BR1991, berry1992, Tamer2003, seim2006, sweeting2009, ciliberto2009, aradillas2010, grieco2014, magnolfi2022, koh2023, xie2024}.
Many of these studies assume that players have complete or incomplete information (or perfectly private information) among the players.
A common feature across this literature is that players' information is exogenously specified by the econometrician.
In contrast, this paper assumes that players have incomplete information about their own payoffs and does not impose any fixed information structure, instead allowing it to be optimally chosen by players. 

The discrete game model in this paper closely resembles recent work by \citet{grieco2014, magnolfi2022, koh2023}, which relaxes standard information assumptions and leverages the information structure to model econometric discrete games under flexible and weak information assumptions.
These studies adopt a novel approach to model player's information with the signals and the distribution of signals, allowing for a more flexible representation of players' information.
However, in these papers, the information structure is still given exogenously by the econometrician and remains fixed.
In contrast, this paper endogenizes the information structure by allowing players to optimally choose the information structure through the information strategy.
By incorporating the cost of information acquisition, this paper fills a gap in the recent literature and contributes to a more realistic framework for modeling behavior and strategies in discrete games.

This paper also builds on a literature on costly information acquisition and rational inattention. 
Most of the recent papers theoretically analyze decision problems of a single agent and determine the agent's optimal behavior or decision \citep{matejka2015, steiner2017,fosgerau2020}.
\citet{matejka2015} show that a rationally inattentive decision maker's optimal decision rule follows a logit-like formula, linking the rational inattention framework to discrete choice models.
\citet{fosgerau2020} generalizes a decision problem in \citet{matejka2015}, suggesting a generalized information cost function and proving that the rational inattention discrete choice problem and the additive random utility discrete choice model are equivalent under this information cost function.
\citet{steiner2017} extends \citet{matejka2015}'s approach and constructs a rationally inattentive decision maker's problem in a dynamic environment.
While I share major assumptions on the setup of the decision problem with these papers, my work extends the analysis to a multi-agent decision problem -- discrete games -- where strategic interaction plays a key role.

The related literature on game theory with costly information acquisition or rationally inattentive players includes \citet{yang2015, yang2020, Denti2023, montes2022}.
\citet{yang2015} investigates a two-player, two-action coordination investment game, and \citet{montes2022} focuses on attention-move games.
In these papers, players have imperfect information about the payoff-relevant state and optimally acquire information.
\citet{Denti2023} extends this framework to a game where players acquire not only about the state but also each other's information.
I adopt the structure of these theoretical models and expand them into an empirical model.
Thus, my model enables the quantitative analysis and estimation of players' information acquisition strategies using data. 
By bridging the gap between theory and empirical application, the model enhances the ability to validate theoretical predictions and better understand strategic behavior in environments characterized by costly information acquisition.

The remainder of this paper is organized as follows. 
Section \ref{sec:model} sets up an econometric model for discrete games with costly information acquisition.
Section \ref{sec:identification} provides the parametric and semiparametric identification.
In Section \ref{sec:montecarlo}, I present identification via Monte Carlo experiments and discuss the estimation method. 
Section \ref{sec:application} applies my framework to the U.S. airline industry to analyze the entry decisions. 
Finally, Section \ref{sec:conc} concludes.
All proofs are in Appendix.

\textbf{Notation.}
The boldface letters, e.g., $\mathbf{X}$ and $\mathbf{x}$, represent vectors.
The capital letters, e.g., $Y_{i}$, denote random variables, while the lowercase letters, e.g., $y_{i}$, represent their realized values.

\section{Model}
\label{sec:model}
In this section, I describe a general model of $2 \times 2$ game with costly information acquisition and provide a two-player entry game as a running example.

Let $\mathcal{I} = \{1, 2\}$ be the set of players, and let the players be indexed by $i=1, 2$.
By convention, I denote the rival of player $i$ as $-i$ or $j$.
\footnote{I consider a two-player game, and denoting the rival by $j$ does not violate the notation, as I can set $j = 3-i$.}
Each player $i$ simultaneously chooses an action $Y_{i}$ from their action set, denoted by $\mathcal{Y}_{i}$.
The action set contains two elements, namely $\mathcal{Y}_{i} = \{0,1\}$.
Let $\mathcal{Y} = \mathcal{Y}_{1} \times \mathcal{Y}_{2}$ represent the set of action profiles, and let $\mathbf{Y} = (Y_{1}, Y_{2})^{\top} \in \mathcal{Y}$ denote the action profile.

\subsection{Payoff Structure}
\label{subsec:payoff}
The payoff structure of player $i$ is additively separable and given by the following equation:
\begin{equation}\label{eq1:payoff}
    U_{i}(Y_{i}, Y_{j}, \varepsilon_{i}; \mathbf{X}, \mathbf{Z}_{i}) = 
    \begin{cases}
        u_{i}(Y_{j}; \mathbf{X}, \mathbf{Z}_{i}) + \varepsilon_{i} & \text{if } Y_{i} = 1, \\
        0, & \text{if } Y_{i} = 0,
    \end{cases}
\end{equation}
where $\mathbf{X} \in \mathcal{X} \subseteq \mathbb{R}^{d_{x}}$ is a vector of market-specific variables and $\mathbf{Z}_{i} \in \mathcal{Z}_{i} \subseteq \mathbb{R}^{d_{z}}$ is a vector of player-specific variables.
The payoff function for action $Y_{i}=1$, $u_{i}: \mathcal{Y}_{j} \times \mathcal{X} \times \mathcal{Z}_{i} \to \mathbb{R}$, can be specified parametrically or nonparametrically.
The payoff shock $\varepsilon_{i} \in \mathcal{E}_{i}$ is unobservable to player $i$ (and to the researcher) when making decisions, and I interpret it as the uncertainty in payoffs.
The payoff for action $Y_{i} = 0$ is normalized to zero.
In the next example below, I demonstrate parametric and nonparametric payoff functions.

The vector of market-specific variables $\mathbf{X}$ is common to all players and has influence on both players' payoffs.
On the contrary, the vector of player-specific variables $\mathbf{Z}_{i}$ only affects player $i$'s payoff, but not on rival's payoff.
As shown in the payoff function \eqref{eq1:payoff}, player $i$'s payoff depends on $\mathbf{X}$ and $\mathbf{Z}_{i}$, but not on $\mathbf{Z}_{j}$.
The variable $\mathbf{Z}_{i}$ is also referred to as the player-specific payoff shifter in the literature and satisfies the exclusion restriction assumption \citep{aradillas2010, bajari2010, xie2024}.
The exclusion restriction assumption plays a key role in the identification, which will be shown in the next section.
Examples of the market-specific variables $\mathbf{X}$ include market size, population, and consumer preference, whereas examples of player-specific variables $\mathbf{Z}_{i}$ include costs, technology, and product quality.
I assume that the variables $(\mathbf{X}', \mathbf{Z}_{i}', \mathbf{Z}_{j}')'$ are observable to all players and the researcher.
Furthermore, I do not restrict the support of each variables.
That is, the support of each variables of $\mathbf{X}$ and $\mathbf{Z}_{i}$ can be either discrete or continuous. 

Given the payoff structure \eqref{eq1:payoff}, the payoff uncertainty arises from two sources; the rival's action $Y_{j}$ and the unobservable payoff shock $\varepsilon_{i}$.
When making a decision, each player does not observe either the realized rival's action or the payoff shocks.
In particular, the unobservable payoff shock $\varepsilon_{i}$ directly affects the payoff, which creates an incentive for player $i$ to acquire information about it.
At the same time, because rival's decision is determined strategically, player $i$ needs to form a belief about rival's behavior.
Thus, the uncertainty in payoffs makes player's decision problem strategic since players weigh the value of acquiring information about their own payoff shocks while anticipating how rival will behave in equilibrium.

Since the main focus of this paper is on $2 \times 2$ games, the payoff function \eqref{eq1:payoff} can be expressed in a standard form commonly used in the literature.
Define two functions $\pi_{i}(\cdot,\cdot)$ and $\delta_{i}(\cdot,\cdot)$ as follows:
\begin{align*}
    \pi_{i}(\mathbf{X}, \mathbf{Z}_{i}) &= u_{i}(Y_{j} = 0; \mathbf{X}, \mathbf{Z}_{i}), \\
    \delta_{i}(\mathbf{X}, \mathbf{Z}_{i}) &= u_{i}(Y_{j} = 1; \mathbf{X}, \mathbf{Z}_{i}) - u_{i}(Y_{j} = 0; \mathbf{X}, \mathbf{Z}_{i}).
\end{align*}
I refer to the function $\pi_{i}(\cdot,\cdot)$ as the base payoff and the function $\delta_{i}(\cdot,\cdot)$ as the strategic effect.
Models without a strategic effect can be considered as a single-agent model where a player chooses action in isolation without strategic interaction.
Thus, I focus on models with non-zero strategic effect $\delta_i(\cdot,\cdot) \neq 0$ for all $\mathbf{X}$ and $\mathbf{Z}_{i}$ for each player $i$.

The sign of the strategic effect $\delta_{i}(\cdot,\cdot)$ depends on the strategic incentives of the game.
The strategic effect is positive if the players want to coordinate, and the example is timing of radio commercials \citep{sweeting2009}.
On the other hand, the strategic effect is negative if the players want to differentiate, and the example is entry games.
In a two-player entry game, the base payoff $\pi_{i}(\cdot,\cdot)$ corresponds to the monopoly profit, and the sum $\pi_{i}(\cdot,\cdot) + \delta_{i}(\cdot,\cdot)$ corresponds to the duopoly profit.

Using the above functions of the base payoff and the strategic effect, I can rewrite the payoff function as follows:
\begin{equation}\label{eq2:payoff}
    U_{i}(Y_{i}, Y_{j}, \varepsilon_{i}; \mathbf{X}, \mathbf{Z}_{i}) = 
    \begin{cases}
        \pi_{i}(\mathbf{X}, \mathbf{Z}_{i}) + \delta_{i}(\mathbf{X}, \mathbf{Z}_{i}) \cdot \mathds{1}(Y_{j}=1) + \varepsilon_{i}, & \text{if } Y_{i} = 1, \\
        0, & \text{if } Y_{i} = 0,
    \end{cases}
\end{equation}
where $\mathds{1}(\cdot)$ is the indicator function. 
Given the additive separable payoff shock $\varepsilon_i$, the original payoff function \eqref{eq1:payoff} can be transformed into a sum of the base payoff and the competitive effect without imposing any restrictions.
The analysis of the rest of this paper is mainly based on the payoff function \eqref{eq2:payoff}.

\begin{example}[A Two-player Entry Game] \label{example:entry}
    Consider a two-player entry game as in \citet{BR1991} and \citet{Tamer2003}.
    There are two players, $\mathcal{I} = \{1, 2\}$.
    The players are firms, and each firm simultaneously chooses either to enter the market $Y_{i} = 1$ or not to enter the market $Y_{i} = 0$.
    The profit function for firm $i$ is defined by 
    \begin{equation*}
        \Pi_{i}(Y_{i}, Y_{j}, \varepsilon_{i}; \mathbf{X}, \mathbf{Z}_{i}) = 
        \begin{cases}
        \pi_{i}(\mathbf{X}, \mathbf{Z}_{i}) + \delta_{i}(\mathbf{X}, \mathbf{Z}_{i}) \cdot \mathds{1}(Y_{j} = 1) + \varepsilon_{i}, & \text{if } Y_{i} = 1, \\
        0, & \text{if } Y_{i} = 0.
        \end{cases}
    \end{equation*}

    Firm $i$ earns zero profit if it chooses not to enter $Y_{i}= 0$.
    If firm $i$ chooses to enter, it gains monopoly profit if the rival stays out $Y_{j} = 0$ or duopoly profit if the rival enters $Y_{j} = 1$.
    In this model, $\delta_{i} (\cdot,\cdot)$ represents the competitive effect and is assumed to be negative.
    The matrix of the payoff structure is summarized as the table \ref{table1}.

    Payoffs $\pi_{i}(\cdot,\cdot)$ and $\delta_{i}(\cdot,\cdot)$ may be specified either as nonparametric functions or as parametric functions of covariates.
    The existing literature on entry games \citep{BR1991, Tamer2003, ciliberto2009} assumes linear parametric payoff functions, e.g.,
    \begin{align*}
        \pi_{i}(\mathbf{X}, \mathbf{Z}_{i}) &= \mathbf{X} \alpha_{i} + \mathbf{Z}_{i} \beta_{i}, \\
        \delta_{i}(\mathbf{X}, \mathbf{Z}_{i}) &= \delta_{i}.
    \end{align*}
    \hfill $\square$
    
    \begin{table}[htbp]
        \centering
        \begin{tabular}{lcc}
                                        & $Y_{2} = 0$ & $Y_{2} = 1$ \\ \cline{2-3}
        \multicolumn{1}{l|}{$Y_{1} = 0$} & \multicolumn{1}{c|}{$0, \;0$} & \multicolumn{1}{c|}{$0, \;\pi_{2}(\mathbf{X}, \mathbf{Z}_{2})+ \varepsilon_{2}$}\\ \cline{2-3}
        \multicolumn{1}{l|}{$Y_{1} = 1$} & \multicolumn{1}{c|}{$\pi_{1}(\mathbf{X}, \mathbf{Z}_{1}) + \varepsilon_{1}, \;0$} & \multicolumn{1}{c|}{$\pi_{1}(\mathbf{X}, \mathbf{Z}_{1}) + \delta_{1}(\mathbf{X}, \mathbf{Z}_{1}) + \varepsilon_{1}, \; \pi_{2}(\mathbf{X}, \mathbf{Z}_{2}) + \delta_{2}(\mathbf{X}, \mathbf{Z}_{2}) + \varepsilon_{2}$}\\ \cline{2-3}
        \end{tabular}
        \caption{The payoff matrix of $2 \times 2$ entry game in Example \ref{example:entry}.}
        \label{table1}
    \end{table}

\end{example}

\subsection{Information Structure}
Every player $i$ does not directly observe their own payoff shocks $\varepsilon_{i}$ and rival's payoff shock $\varepsilon_{j}$ when they make a decision.
The payoff shock $\varepsilon_{i}$ is realized and observable to each player $i$ after the realization of action profile $\mathbf{Y}$.
While the players are uncertain about the exact value of $\varepsilon_{i}$, they have a prior belief on the payoff shocks $\varepsilon = (\varepsilon_{i}, \varepsilon_{j})$.
The payoff shock $\varepsilon$ is distributed according to the distribution $F_{i}$.
I refer to the distribution $F_{i}$ as player $i$'s prior belief.
The following assumption summarizes the prior beliefs.

\begin{assumption}\label{assump:payoffshock}
    \begin{enumerate}
        \item The payoff shocks $\varepsilon_{i}$ and $\varepsilon_{j}$ are independent conditional on $(\mathbf{X}', \mathbf{Z}_{i}', \mathbf{Z}_{j}')'$.
        Moreover, for each player $i$, the payoff shock $\varepsilon_{i}$ is independent of $(\mathbf{X}', \mathbf{Z}_{i}', \mathbf{Z}_{j}')'$.

        \item The prior belief $F_{i}(\varepsilon)$ is absolutely continuous with respect to the Lebesgue measure. 
        That is, the prior $F_{i}(\varepsilon)$ is continuously differentiable and strictly increasing almost everywhere over the support of $\varepsilon$.
    \end{enumerate}
\end{assumption}

I model players' information through the signals they receive and the corresponding distributions, following \citet{magnolfi2022} and \citet{koh2023}.
Each player $i$ receives a private signal $\tau_{i}^{\mathbf{X}, \mathbf{Z}} \in \mathcal{T}_{i}^{\mathbf{X}, \mathbf{Z}}$, where $\mathcal{T}_{i}^{\mathbf{X}, \mathbf{Z}}$ denotes the set of all possible signals which player $i$ may receive.\footnote{Here, $(\mathbf{X}, \mathbf{Z})$ is a shorthand for $(\mathbf{X}, \mathbf{Z}_{i}, \mathbf{Z}_{j}).$}
The private signal $\tau_{i}^{\mathbf{X}, \mathbf{Z}}$ may contain information about the payoff shock $\varepsilon$, and its precision depends on the signal distribution.

Formally, the information structure $S_{i}^{\mathbf{X}, \mathbf{Z}}$ of player $i$ is defined as a pair consisting of the set of signals and the probability distributions over the signals:
    \begin{equation}
        S_{i}^{\mathbf{X}, \mathbf{Z}} = \left( \mathcal{T}_{i}^{\mathbf{X}, \mathbf{Z}}, \{P_{i}^{\mathbf{X}, \mathbf{Z}}(\cdot | \varepsilon): \varepsilon\in \mathcal{E}\} \right), 
    \end{equation}
where $\{P_{i}^{\mathbf{X}, \mathbf{Z}}(\cdot | \varepsilon)\}$ represents the family of conditional distributions of signals given the payoff shock $\varepsilon$. 
Each signal $\tau_{i}^{\mathbf{X}, \mathbf{Z}} \in \mathcal{T}_{i}^{\mathbf{X}, \mathbf{Z}}$ is realized according to its corresponding probability $P_{i}^{\mathbf{X}, \mathbf{Z}}( \tau_{i}^{\mathbf{X}, \mathbf{Z}} \, | \, \varepsilon)$.

One advantage of utilizing the information structure is that it allows us to express any type of information assumption from the existing literature.
To illustrate the concept of the information structure, consider the following two-player entry game example.

\begin{example}[Example \ref{example:entry} continued] 
    In \citet{BR1991}, \citet{Tamer2003}, and \citet{ciliberto2009}, firm $i$ not only observes his own payoff shock $\varepsilon_{i}$, but also rival's payoff shock $\varepsilon_{j}$.
    Thus, the firms have complete information.

    The complete information can be interpreted as follows: each player $i$ receives the exact value of $\varepsilon_{i}$ through the signal $\tau_{i}$.
    Formally, I can express the complete information structure as follows.
    The signal space for each firm is defined by $\mathcal{T}_{1} = \mathcal{E}_{1} \times \mathcal{E}_{2}$ and $\mathcal{T}_{2} = \mathcal{E}_{1} \times \mathcal{E}_{2}$.
    The distribution over signals for each firm is defined by
        \begin{align*}
            & P_{1} \left( \tau_{1} = (\varepsilon_{1}, \varepsilon_{2}) \, | \, \varepsilon_{1}, \varepsilon_{2} \right) = 1, \\
            & P_{2} \left( \tau_{2} = (\varepsilon_{1}, \varepsilon_{2}) \, | \, \varepsilon_{2}, \varepsilon_{1} \right) = 1.
        \end{align*} 
    \hfill $\square$
\end{example}

\begin{example}
    I describe another information structure from \citet{seim2006}, \citet{aradillas2010}, and \citet{bajari2010}.
    Firm $i$ only observes his own payoff shock $\varepsilon_{i}$, but does not receive any signals about their rival's payoff shock $\varepsilon_{j}$.
    In this situation, the firms have perfectly private information or incomplete information.
    The signal space for each firm is expressed as, for example,  $\mathcal{T}_{1} = \mathcal{E}_{1}\times \left\{ -\infty \right\}$ and $\mathcal{T}_{2} = \mathcal{E}_{2} \times \left\{ -\infty \right\}$.
    The distribution over signals for each firm is 
        \begin{align*}
            & P_{1} \left( \tau_{1} = (\varepsilon_{1}, -\infty)  \,|\, \varepsilon_{1}, \varepsilon_{2} \right) = 1, \\
            & P_{2} \left( \tau_{2} = (\varepsilon_{2}, -\infty)  \,|\, \varepsilon_{2}, \varepsilon_{1} \right) = 1.
        \end{align*}
    
    Alternatively, the signal spaces can be defined as $\mathcal{T}_{1} = \mathcal{E}_{1}$ and $\mathcal{T}_{2} = \mathcal{E}_{2}$.
    If the signal $\tau_{i}$ and rival's payoff shock $\varepsilon_{j}$ are conditionally independent given own payoff shock $\varepsilon_{i}$,
    \footnote{Once player $i$'s own payoff shock $\varepsilon_{i}$ is given, the signal $\tau_{i}$ does not provide any information about rival's payoff shock $\varepsilon_{j}$.} the distribution of the signals can be expressed as
        \begin{align*}
            & P_{1} \left( \tau_{1} = \varepsilon_{1}\,|\, \varepsilon_{1}, \varepsilon_{2} \right) = P_{1} \left( \tau_{1} = \varepsilon_{1}  \,|\, \varepsilon_{1} \right) = 1, \\
            & P_{2} \left( \tau_{2} = \varepsilon_{2} \,|\, \varepsilon_{2}, \varepsilon_{1} \right) = P_{2} \left( \tau_{2} = \varepsilon_{2} \,|\, \varepsilon_{2} \right) = 1.
        \end{align*}
    This information structure is identical to the one in \citet{seim2006}, \citet{aradillas2010}, and \citet{bajari2010}.
    \hfill $\square$
\end{example}

\begin{example}
    I can also consider a situation where each firm $i$ has no information about his own payoff shock $\varepsilon_{i}$.
    In this case, each firm's signal space is singleton, e.g., $\mathcal{T}_{i} = \left\{ -\infty \right\}$ for all realization of $\varepsilon$, and the distribution over the signal is, for $i=1,2$,
    \begin{equation*}
        P_{i} \left( \tau_{i} = -\infty \,|\, \varepsilon_{i}, \varepsilon_{j} \right) = 1.
    \end{equation*} 
    \hfill $\square$       
\end{example}

\subsection{Timeline of the Game}
The timing of the Bayesian game is as follows. 
In period 1, two players enter the game, and a vector of the covariates $(\mathbf{x}', \mathbf{z}_{1}', \mathbf{z}_{2}')'$ is realized.
In addition, the payoff shock $\varepsilon$ is drawn by nature.
Each player has limited information about their payoffs due to the unobserved payoff shock. 
In period 2, each player determines how much information to acquire and pays information costs.
I refer to this strategy as information strategy.
In period 3, the players receive private signals and update beliefs about the payoff shock.
In period 4, the players simultaneously choose actions $Y_{i}$ as a Bayesian expected utility maximizer.
I refer to this strategy as action strategy.
In period 5, the players' payoffs are realized.
This is summarized in the Figure \ref{fig:timing}.

\vskip 20pt

\begin{figure}[!t]
    \centering
    \begin{tikzpicture}[scale=1]
        \draw[black, thick, ->] (-6,0) -- (8,0);
        \filldraw[black] (-5,0) circle (2pt);
        \draw[black, ->] (-5,-2) -- (-5,-0.15);
        \node[below, align=center] at (-5,-2) {\small $(\mathbf{x}', \mathbf{z}_{1}', \mathbf{z}_{2}')'$ \\are realized};
        
        \filldraw[black] (-2,0) circle (2pt);
        \draw[black, ->] (-2,-2) -- (-2,-0.15);
        \node[below, align=center] at (-2,-2) {\small Select an \\information \\ strategy};
        \filldraw[black] (1,0) circle (2pt);
        \draw[black, ->] (1,-2) -- (1,-0.15);
        \node[below, text width=3.5cm, align=center] at (1,-2) {\small Receive a \\ signal and \\update beliefs};
        
        \filldraw[black] (4,0) circle (2pt);
        \draw[black, ->] (4,-2) -- (4,-0.15);
        \node[below, text width=3cm, align=center] at (4,-2) {\small Choose \\an action $Y_{i}$};
        
        \filldraw[black] (7,0) circle (2pt);
        \draw[black, ->] (7,-2) -- (7,-0.15);
        \node[below, text width=3.5cm, align=center] at (7,-2) {\small Payoffs $U_{i}$, \\ the shock $\varepsilon_{i}$ are observed};
    \end{tikzpicture}
    \caption{Timing of the Bayesian game}
    \label{fig:timing}
\end{figure}
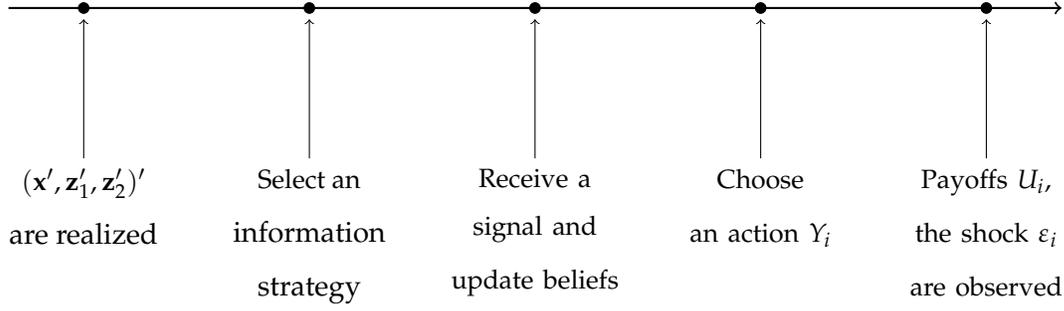

\subsection{Costly Information Acquisition}
Players have prior beliefs about the payoff shock $\varepsilon$ and choose how much information to acquire to form posterior beliefs about the payoffs.
However, acquiring information is costly.
According to the rational inattention theory \citep{matejka2015, caplin2019}, players cannot gather and process all available information, as attention is scarce and limited.
Moreover, acquiring information requires both time and effort, further adding to the cost of the information acquisition process.

Player $i$ cannot influence the realization of $\varepsilon$, but can freely select which signals to observe and their distributions.
The precision of these signals depends on the chosen distribution, and selecting more informative signals corresponds to acquiring more precise information.
However, there is a cost of acquiring information with more informative signals being more costly.
Given the prior belief $F_{i}$, observed covariates $(\mathbf{X}', \mathbf{Z}')'$, and the information cost function, player $i$ optimally chooses the information structure 
\begin{equation*}
    S_{i}^{\mathbf{X}, \mathbf{Z}} = \left(\mathcal{T}_{i}^{\mathbf{X},\mathbf{Z}}, \{P_{i}^{\mathbf{X},\mathbf{Z}}(\cdot|\varepsilon)\}\right),
\end{equation*}
which reflects a trade-off between the benefits of acquiring precise information and the costs of doing so.
As a result, players obtain partially informative signals, which lie between fully informative and completely uninformative signals.

In specifying information cost function, I adopt the entropy-based mutual information function widely used in the rational inattention literature \citep{matejka2015}.
First, I introduce the entropy function, which measures uncertainty or unpredictability of an event.
Next, I define the mutual information function using the entropy function.
The mutual information function measures the expected uncertainty reduction.
Finally, I define the information cost function, which is proportional to the mutual information.

Let $X$ be a continuous random variable with the density function $f(x)$.
Following \citep{cover2006}, the entropy of $X$ is defined as follows
\footnote{The random variable $X$ corresponds to the payoff shock $\varepsilon$ in the model. In the econometrics literature, the payoff shock follows the normal distribution or the type 1 extreme value distribution. On the other hand, theory literature assumes a discrete payoff shock in many cases. The entropy function with a discrete random variable $X$ is defined by $H(X) = H(p) = \sum_{x \in \mathcal{X}} p(x) \ln p(x)$ where $p(x)$ is a probability mass function.}: 
    \begin{equation*}
        H(f) = H(X) = -\int_{x\in\mathcal{X}} f(x) \ln f(x) \, \mathrm{d}x.
    \end{equation*}

To measure the acquired information, I use the mutual information.
Let $X$ and $Y$ be two random variables with the joint density function $f(x,y)$ and marginal density functions $f(x)$ and $f(y)$.
The mutual information $I(X,Y)$ between two random variables $X$ and $Y$ is defined as follows: 
    \begin{align*}
        I(X,Y) &= \int_{\mathcal{Y}} \int_{\mathcal{X}} f(x,y) \ln \left( \frac{f(x,y)}{f(x) f(y)} \right) \; \mathrm{d}x \, \mathrm{d}y \nonumber \\
        &=  H(X) - \mathrm{E}_{Y}\left[H(X \, | \, Y) \right].
    \end{align*}
The first term $H(X)$ denotes the entropy of $X$, which measures uncertainty of $X$ without any information about $Y$.
The second term $H(X \, | \, Y)$ denotes the entropy of $X$ given $Y$, which represents the uncertainty of $X$ once $Y$ is known.
Thus, the mutual information $I(X,Y)$ quantifies the expected uncertainty reduction of $X$ due to the knowledge of $Y$, which corresponds to the amount of information acquired.
In other words, the mutual information measures the decrease in prior uncertainty after information acquisition.

Finally, the information cost function $C(X,Y)$ is defined to be proportional to the mutual information, which measures the amount of information acquired.
Formally, the information cost function is expressed as:
    \begin{equation*}
        C(X,Y) = \lambda \cdot I(X,Y),
    \end{equation*} 
where $\lambda > 0$ is unit cost of information and $I(X,Y)$ is the mutual information between $X$ and $Y$.

Based on the above definitions, the information cost function in my setting is defined as follows.
First, I define the mutual information between the prior belief about the payoff shock and the posterior belief after observing the signal
    \begin{equation}\label{eq:mutualinfo}
        I(F_{i}, P_{i}^{\mathbf{X}, \mathbf{Z}}) = H(F_{i}(\cdot)) - \mathrm{E}_{\tau_{i}} \left[H(P_{i}^{\mathbf{X}, \mathbf{Z}}(\cdot|\tau_{i}^{\mathbf{X}, \mathbf{Z}})) \right],
    \end{equation}
where $H(F_{i}(\cdot))$ is the entropy of the prior belief $F_{i}(\varepsilon)$, and $H(P_{i}^{\mathbf{X}, \mathbf{Z}}(\cdot|\tau_{i}^{\mathbf{X}, \mathbf{Z}}))$ is the entropy of the posterior belief $P_{i}^{\mathbf{X}, \mathbf{Z}}(\varepsilon|\tau_{i}^{\mathbf{X}, \mathbf{Z}})$ after observing the signal $\tau_{i}^{\mathbf{X}, \mathbf{Z}}$.
This measures the information acquired about the payoff shock $\varepsilon$ by observing the signal $\tau_{i}^{\mathbf{X}, \mathbf{Z}}$.

Next, the information cost function is defined as:
\begin{equation}\label{eq:infocost}
        C_{i}(F_{i}, P_{i}^{\mathbf{X}, \mathbf{Z}}) = \lambda_{i} \cdot I(F_{i}, P_{i}^{\mathbf{X}, \mathbf{Z}}),
    \end{equation}
where the unit cost of information $\lambda_{i} > 0$.
The information cost function is proportional to the mutual information $I(F_{i}, P_{i}^{\mathbf{X}, \mathbf{Z}})$.
If a signal $\tau_{i}^{\mathbf{X}, \mathbf{Z}}$ does not contain any information about the payoff shock $\varepsilon$ or is independent of the payoff shock, the mutual information is zero, $I(F_{i}, P_{i}^{\mathbf{X}, \mathbf{Z}}) = 0$, and so is the information cost $C_{i}(F_{i}, P_{i}^{\mathbf{X}, \mathbf{Z}}) = 0$.
The following example illustrates the mutual information and the information cost function.

\begin{example}
    Consider a simple entry decision with unobservable payoff shocks given as $\varepsilon \in \{G, B\}$.
    Good market conditions are represented by $\varepsilon = G$, while bad market conditions are represented by $\varepsilon = B$.
    Suppose that the firm's prior belief about the payoff shock is given as $P(G) = 0.5$ and $P(B) = 0.5$.
    The entropy of the prior belief is 
    \begin{equation*}
        H(P) = -\sum_{x \in \{G, B\}} P(x) \ln P(x) \approx 0.693.
    \end{equation*}
    Now, consider two specific information structures $S_{1} = (\{0,1\}, P_{1})$ and $S_{2} = (\{0,1\}, P_{2})$.
    Each information structure generates signals $\tau \in \{0, 1\}$, where signal $\tau = 1$ can be interpreted as a recommendation to enter the market, and $\tau = 0$ as a recommendation not to enter.
    Let the signal distributions of the $S_{1}$ be
    \begin{equation*}
        P_{1}(\tau = 1 | G) = P_{1}(\tau = 0 | B) = 0.8, \quad P_{1}(\tau = 1 | B) = P_{1}(\tau = 0 | G) = 0.2,
    \end{equation*}
    and the signal distributions of the $S_{2}$ be
    \begin{equation*}
        P_{2}(\tau = 1 | G) = P_{2}(\tau = 0 | B) = 0.6, \quad P_{2}(\tau = 1 | B) = P_{2}(\tau = 0 | G) = 0.4.
    \end{equation*}
    Applying Bayes' rule, the posterior beliefs under $S_{1}$ are
    \begin{align*}
        &P_{1}(G|1) = 0.8, \quad P_{1}(B|1) = 0.2, \\
        &P_{1}(G|0) = 0.2, \quad P_{1}(B|0) = 0.8,
    \end{align*}
    and the mutual information is:
    \begin{equation*}
        I(P, P_{1}(\cdot|\tau)) = H(P) - \mathrm{E}_{\tau}[H(P_{1}(\cdot|\tau))] \approx 0.193.
    \end{equation*}
    Assuming the unit cost of information to be $\lambda = 1$, the information cost for $S_{1}$ is:
    \begin{equation*}
        C(P, P_{1}) \approx 0.193.
    \end{equation*}

    Similarly, the posterior beliefs under $S_{2}$ are:
    \begin{align*}
        &P_{2}(G|1) = 0.6, \quad P_{2}(B|1) = 0.4, \\
        &P_{2}(G|0) = 0.4, \quad P_{2}(B|0) = 0.6,
    \end{align*}
    with the mutual information:
    \begin{equation*}
        I(P, P_{2}(\cdot|\tau)) = H(P) - \mathrm{E}_{\tau}[H(P_{2}(\cdot|\tau))] \approx 0.020,
    \end{equation*}
    and the information cost:
    \begin{equation*}
        C(P, P_{2}) \approx 0.020.
    \end{equation*}

    Therefore, we conclude that the information structure $S_{1}$ is more informative than $S_{2}$. 
    This result arises because the signals in $S_{1}$ are more correlated with the true market conditions.
    Under $S_{1}$, the signal correctly matches the underlying market condition with higher probability than under $S_{2}$.
    As a result, the posterior beliefs under $S_{1}$ are more informative, leading to a larger expected reduction in uncertainty.
    \hfill $\square$
\end{example}

The intuition of the entropy-based information cost function can be explained as follows.
Consider a situation where a player asks a series of binary questions (yes or no questions) at a cost per question to determine the outcome of the payoff shocks.
The more questions he asks, the more information he can acquire, but also the higher the costs he pays.
This cost is proportional to the number of questions he asked and reduction in uncertainty about the payoff shocks.
Initially, the player faces a certain level of ex-ante uncertainty, measured by the entropy of the prior belief.
As he gathers information, this uncertainty is reduced.
The mutual information function precisely quantifies this average reduction in uncertainty, reflecting how much the player learns about the true payoff shocks by observing a signal.
Therefore, a higher mutual information value implies a greater reduction in uncertainty, which, in this model, corresponds to a higher information cost.

\subsection{Strategy and Equilibrium}
Each player's decision problem involves two strategies.
First, in period 2 of Figure \ref{fig:timing}, the player chooses the information strategy.
The information strategy is the selection of the optimal information structure $S_{i}^{\mathbf{X},\mathbf{Z}} = (\mathcal{T}_{i}^{\mathbf{X}, \mathbf{Z}}, P_i^{\mathbf{X},\mathbf{Z}}(\cdot \,|\,\varepsilon_{i}))$ given the information cost function \eqref{eq:infocost}.
\footnote{Player $i$'s payoff shock $\varepsilon_{i}$ is independent from $\varepsilon_{j}$ conditional on $(\mathbf{X}', \mathbf{Z}')'$ by Assumption~\ref{assump:payoffshock}. 
I further assume that the signal $\tau_{i}^{\mathbf{X},\mathbf{Z}}$ is independent from $\varepsilon_{j}$ given $\varepsilon_{i}$.
That is, $\varepsilon_{j}$ is the private information of rival, and the signal $\tau_{i}^{\mathbf{X},\mathbf{Z}}$ does not provide any information about $\varepsilon_{j}$ once $\varepsilon_{i}$ is given.
Thus, the signal distribution $P_{i}^{\mathbf{X},\mathbf{Z}}(\cdot|\varepsilon)$ can be simplified to $P_{i}^{\mathbf{X},\mathbf{Z}}(\cdot|\varepsilon_{i})$.}
In other words, it is the choice of how precise the signals should be, balancing the benefits of better information against the costs of acquiring it.

Second, once the player has chosen the optimal information structure, received the signals $\tau_{i}^{\mathbf{X},\mathbf{Z}}$, the player updates his prior beliefs to form the posterior beliefs using Bayes' rule: 
\begin{equation*}
    P_{i}^{\mathbf{X},\mathbf{Z}}(\varepsilon_{i}\,|\,\tau_{i}^{\mathbf{X},\mathbf{Z}}) = \frac{P_{i}^{\mathbf{X},\mathbf{Z}}(\tau_{i}^{\mathbf{X},\mathbf{Z}}\,|\,\varepsilon_{i}) f_{i}(\varepsilon_{i})}{P_{i}^{\mathbf{X},\mathbf{Z}}(\tau_{i}^{\mathbf{X},\mathbf{Z}})},
\end{equation*}
where $f_{i}(\cdot)$ is the density of the prior belief $F_{i}(\cdot)$.
After forming the posterior beliefs $P_{i}^{\mathbf{X},\mathbf{Z}}(\cdot\,|\,\tau_{i}^{\mathbf{X},\mathbf{Z}})$, the player chooses an action $Y_{i}$.
This stage corresponds to periods 3 and 4 in Figure \ref{fig:timing}.
The action strategy is a mapping from the signal to the action space $\mathcal{Y}_{i}$, denoted by $\sigma_{i}(\tau_{i}^{\mathbf{X},\mathbf{Z}})$.

Formally, the action strategy is defined as the best response to the rival's action strategy given the optimal information structure and the beliefs about the rival's action $b_{j}(Y_{j}|\mathbf{X},\mathbf{Z})$:
\begin{equation*}
    \sigma_{i}^{\mathbf{X}, \mathbf{Z}}(\tau_{i}^{\mathbf{X}, \mathbf{Z}}) = Y_{i} \iff \arg\max_{Y_{i} \in \mathcal{Y}_{i}} \mathrm{E}_{\varepsilon_{i} \sim P_{i}^{\mathbf{X},\mathbf{Z}}(\cdot|\tau_{i}^{\mathbf{X}, \mathbf{Z}})} \left[ \sum_{Y_{j}} U_{i}(Y_{i}, Y_{j}, \varepsilon_{i};\mathbf{X},\mathbf{Z}_{i})\cdot b_{j}(Y_{j}|\mathbf{X}, \mathbf{Z}_{j}) \right],
\end{equation*}
where the expectation is taken with respect to the posterior belief $P_{i}^{\mathbf{X},\mathbf{Z}}(\varepsilon\,|\,\tau_{i}^{\mathbf{X},\mathbf{Z}})$.
\footnote{The expectation taken with respect to the posterior belief $P_{i}^{\mathbf{X},\mathbf{Z}}(\varepsilon_{i}\,|\,\tau_{i}^{\mathbf{X},\mathbf{Z}})$ is:
\begin{equation*}
    \int_{\varepsilon_{i}} \left[ \sum_{Y_{j}} U_{i}(Y_{i}, Y_{j}, \varepsilon_{i};\mathbf{X},\mathbf{Z}_{i}) \cdot b_{j}(Y_{j}|\mathbf{X},\mathbf{Z}_{j})\right] P_{i}^{\mathbf{X},\mathbf{Z}}(\varepsilon_{i}\,|\,\tau_{i}^{\mathbf{X},\mathbf{Z}}) \mathrm{d}\varepsilon_{i}.
\end{equation*}}

One important result of decision-making with the costly information acquisition is that the information strategy and action strategy can be combined into a single strategy under optimal behavior \citep{matejka2015, yang2015, yang2020}.
Since information acquisition is costly, players will avoid collecting useless or redundant information, which determines the optimal signal space.
In particular, the signal space of each player's optimal information structure contains at most as many signals as the number of elements of the action set $\mathcal{Y}_{i}$.
Moreover, under optimal behavior, each signal leads to exactly one action.
Distinguishing distinct signals that induce the same action is inefficient, as it requires more costs but gives the same payoff.
It is more efficient for the players to collapse such distinct signals into one single signal.
The following lemma from \citet{matejka2015, yang2015, yang2020} formalizes this property.

\begin{lemma}\label{lemma1}
    Let $S_{i}^{\mathbf{X}, \mathbf{Z}} = (\mathcal{T}_{i}^{\mathbf{X}, \mathbf{Z}}, P_{i}^{\mathbf{X}, \mathbf{Z}})$ be the optimal information structure for player $i \in \mathcal{I}$.
    Then, player $i$ always plays pure strategies, and the signal space $\mathcal{T}_{i}^{\mathbf{X}, \mathbf{Z}}$ contains no more elements than those in $\mathcal{Y}_{i}$.
\end{lemma}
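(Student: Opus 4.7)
My approach splits the claim into two parts: (i) any optimal strategy can be taken to prescribe a pure action for each realized signal, and (ii) an optimal information structure can be taken to use no more than $|\mathcal{Y}_i|$ distinct signals. Both follow from the same economic principle: because information acquisition is costly while gross payoffs depend only on the conditional distribution of actions given $\varepsilon_i$, any ``waste'' in $(S_i^{\mathbf{X},\mathbf{Z}}, \sigma_i)$ that does not improve gross payoffs can be eliminated at a weakly lower information cost.

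For part (i), I fix any candidate optimum $(S_i^{\mathbf{X},\mathbf{Z}}, \sigma_i)$ and examine the action stage. Given any realized signal $\tau$, the player solves
\begin{equation*}
\max_{Y_i \in \mathcal{Y}_i} \; \mathrm{E}_{\varepsilon_i \sim P_i^{\mathbf{X},\mathbf{Z}}(\cdot|\tau)} \left[ \sum_{Y_j} U_i(Y_i, Y_j, \varepsilon_i; \mathbf{X}, \mathbf{Z}_i) \cdot b_j(Y_j|\mathbf{X}, \mathbf{Z}_j) \right].
\end{equation*}
Since $\mathcal{Y}_i$ is finite, the maximum is attained by at least one pure action; breaking ties arbitrarily yields a pure best response $\sigma_i : \mathcal{T}_i^{\mathbf{X},\mathbf{Z}} \to \mathcal{Y}_i$ that achieves the same value, so without loss $\sigma_i$ is a deterministic function of $\tau$.

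For part (ii), I argue by contradiction. Suppose that an optimal $\mathcal{T}_i^{\mathbf{X},\mathbf{Z}}$ has strictly more than $|\mathcal{Y}_i|$ elements. By the pigeonhole principle there exist distinct $\tau, \tau'$ with $\sigma_i(\tau) = \sigma_i(\tau') = y$. I construct $\tilde{S}_i$ by replacing $\tau, \tau'$ with a single signal $\tilde{\tau}$ whose conditional law is $\tilde{P}_i(\tilde{\tau}|\varepsilon_i) = P_i(\tau|\varepsilon_i) + P_i(\tau'|\varepsilon_i)$ and setting $\tilde{\sigma}_i(\tilde{\tau}) = y$ while leaving all other signal-action mappings intact. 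Because the conditional probability of choosing $y$ given $\varepsilon_i$ is pointwise preserved and $Y_j$ is drawn from beliefs $b_j$ that depend only on $(\mathbf{X}, \mathbf{Z}_j)$ and not on player $i$'s signal, the gross expected payoff is identical under $S_i$ and $\tilde{S}_i$. Meanwhile, $\tilde{\tau}$ is a deterministic coarsening of $\tau_i$, so the data-processing inequality---equivalently, the concavity of entropy applied to the posterior mixture at $\tilde{\tau}$---yields $I(F_i, \tilde{P}_i) \leq I(F_i, P_i)$, so $C_i$ weakly decreases. Iterating, I obtain an optimum with at most $|\mathcal{Y}_i|$ signals, contradicting the supposed minimality.

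The main obstacle is verifying the payoff-preservation step cleanly across a continuous $\varepsilon_i$ and confirming that the information-cost comparison is valid for the general priors allowed by Assumption~\ref{assump:payoffshock}. The payoff step reduces to the observation that expected utility depends on $(S_i, \sigma_i)$ only through the map $\varepsilon_i \mapsto \Pr(Y_i = y \mid \varepsilon_i)$, which is literally preserved under additive merging of signal densities. The cost step uses that the marginal contribution to the expected posterior entropy from $\{\tau, \tau'\}$ equals $P(\tau) H(\varepsilon_i|\tau) + P(\tau') H(\varepsilon_i|\tau')$, while the merged signal contributes $(P(\tau) + P(\tau')) H(\varepsilon_i|\tilde{\tau})$; concavity of $H$ in the posterior density gives the required inequality termwise, and integrating delivers the bound on mutual information.
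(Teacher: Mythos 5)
Your proof is correct and follows essentially the same route as the paper's: collapse signals that induce the same action into a single signal, note that the gross expected payoff depends on $(S_i^{\mathbf{X},\mathbf{Z}},\sigma_i)$ only through the induced map $\varepsilon_i \mapsto \Pr(Y_i = y \mid \varepsilon_i)$ and is therefore preserved, and invoke monotonicity of mutual information under garbling (concavity of entropy in the posterior) to conclude that the information cost weakly falls. If anything, your two-step decomposition---purify the action rule first, then merge signals pairwise---is cleaner than the paper's one-shot three-way partition, which assigns the mixed-action signals to the ``action $0$'' cell and asserts payoff equality there without justification.
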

\begin{proof}
    In the Appendix
\end{proof}

Lemma~\ref{lemma1} implies that the optimal information structure can be represented as
\begin{equation*}
    S_{i}^{\mathbf{X}, \mathbf{Z}} = \left(\mathcal{Y}_{i}, \{P_{i}^{\mathbf{X}, \mathbf{Z}}(Y_{i}|\varepsilon_{i})\}\right).
\end{equation*}
Here, the signal space contains two signals, i.e., $\mathcal{T}_{i}^{\mathbf{X},\mathbf{Z}} \equiv \mathcal{Y}_{i} = \{0,1\}$, with the corresponding signal distribution $P_{i}^{\mathbf{X}, \mathbf{Z}}(Y_{i}|\varepsilon_{i})$.
Intuitively, receiving the signal $\tau_{i}^{\mathbf{X}, \mathbf{Z}}=1$ can be interpreted as ``accept/yes", whereas $\tau_{i}^{\mathbf{X}, \mathbf{Z}}=0$ corresponds to ``reject/no".
The player's optimal action strategy is simply to follow the received signal.
Specifically, if player $i$ receives a signal of ``accept/yes", $\tau_{i}^{\mathbf{X}, \mathbf{Z}} = 1$, with probability $P_{i}^{\mathbf{X}, \mathbf{Z}}(1|\varepsilon_{i})$, then the player chooses action $Y_{i}=1$ as recommended.
Conversely, if player $i$ receives a signal of ``reject/no", $\tau_{i}^{\mathbf{X}, \mathbf{Z}} = 0$, with probability $P_{i}^{\mathbf{X}, \mathbf{Z}}(0|\varepsilon_{i})$, then the player chooses action $Y_{i}=0$ as recommended.
Thus, the action strategy is
\begin{equation*}
    \sigma_{i}^{\mathbf{X}, \mathbf{Z}}(\tau_{i}^{\mathbf{X},\mathbf{Z}}=Y_{i}) = Y_{i}.
\end{equation*}
Hence, once the optimal information structure is chosen, the action strategy is automatically determined by following the signal.
In this sense, it is without loss of generality to regard player $i$'s strategy as focusing solely on the information strategy.

Each player $i$ maximizes the ex-ante expected payoff less the information cost.
Let $b_{j}(Y_{j}|\mathbf{X}, \mathbf{Z})$ be player $i$'s belief about the rival's action $Y_{j}$.
Using the result of Lemma~\ref{lemma1}, player $i$'s ex-ante expected payoff maximization problem can be expressed as:
\begin{equation}\label{eq:optim}
    \max_{P_{i}^{\mathbf{X}, \mathbf{Z}}:\mathcal{E}_{i}\to(0,1)} \int_{\varepsilon_{i}} \sum_{Y_{i}} \sum_{Y_{j}} U_{i}(Y_{i}, Y_{j}, \varepsilon_{i}; \mathbf{X}, \mathbf{Z}_{i}) \cdot b_{j}(Y_{j}|\mathbf{X}, \mathbf{Z}) \cdot P_{i}^{\mathbf{X},\mathbf{Z}}(Y_{i}|\varepsilon_{i}) \, \mathrm{d} F_{i}(\varepsilon_{i}) - C_{i}(F_{i}, P_{i}^{\mathbf{X}, \mathbf{Z}})
\end{equation}
    subject to
\begin{equation*}
    C_{i}(F_{i}, P_{i}^{\mathbf{X}, \mathbf{Z}}) = \lambda_{i} I(F_{i}, P_{i}^{\mathbf{X}, \mathbf{Z}}), \quad 
    \sum_{Y_{i}} P_{i}^{\mathbf{X}, \mathbf{Z}} = 1, \quad
    P_{i}^{\mathbf{X}, \mathbf{Z}}(Y_{i}|\varepsilon) \geq 0, \quad \text{for all } i \in \mathcal{I},\,\varepsilon_{i} \in \mathcal{E}_{i}.
\end{equation*}

The solution to the optimization problem has been well established in \citet{matejka2015} and \citet{caplin2019}.
\footnote{\citet{matejka2015} established the necessary conditions for the rational inattention optimization problem, but did not prove sufficiency. 
Subsequently, \citet{caplin2019} derived the necessary and sufficient conditions using the concept of consideration sets.} 
Proposition~\ref{prop:optimsol} below characterizes the player's optimal strategy.

\begin{proposition}\label{prop:optimsol}
    Let $P_{i}^{\mathbf{X},\mathbf{Z}}(Y_{i}|\varepsilon_{i})$ be a strategy of player $i$ in the decision problem \eqref{eq:optim}.
    Then the strategy $P_{i}^{\mathbf{X},\mathbf{Z}}(Y_{i}|\varepsilon_{i})$ is optimal if and only if it satisfies the following:
    \begin{equation}\label{eq:optimsol}
        P_{i}^{\mathbf{X},\mathbf{Z}}(Y_{i}|\varepsilon_{i}) 
        = \frac{P_{i}^{\mathbf{X},\mathbf{Z}}(Y_{i}) \exp\left( \sum_{Y_{j}} U_{i}(Y_{i}, Y_{j}, \varepsilon_{i}; \mathbf{X}, \mathbf{Z}_{i}) b_{j}(Y_{j}|\mathbf{X}, \mathbf{Z}) \right)^{1/\lambda_{i}}}{\sum_{Y_{i'}}P_{i}^{\mathbf{X},\mathbf{Z}}(Y_{i}') \exp\left( \sum_{Y_{j}} U_{i}(Y_{i}', Y_{j}, \varepsilon_{i}; \mathbf{X}, \mathbf{Z}_{i}) b_{j}(Y_{j}|\mathbf{X}, \mathbf{Z})\right)^{1/\lambda_{i}}},\text{ a.s.}
    \end{equation}
    with
    $P_{i}^{\mathbf{X}, \mathbf{Z}}(Y_{i}) = \int_{\varepsilon_{i}} P_{i}^{\mathbf{X},\mathbf{Z}}(Y_{i}|\varepsilon_{i}) \,\mathrm{d}F_{i}(\varepsilon_{i})>0$.
\end{proposition}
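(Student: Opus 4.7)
The plan is to cast (\ref{eq:optim}) as a concave maximization over the channel $P_{i}^{\mathbf{X},\mathbf{Z}}(\cdot\mid\varepsilon_{i})$ and characterize its solution via Lagrangian first-order conditions. Write $v_{i}(Y_{i},\varepsilon_{i}) := \sum_{Y_{j}} U_{i}(Y_{i},Y_{j},\varepsilon_{i};\mathbf{X},\mathbf{Z}_{i})\,b_{j}(Y_{j}\mid\mathbf{X},\mathbf{Z})$ for the expected per-action payoff under the rival-action belief, so the payoff piece of the objective is linear in $P_{i}^{\mathbf{X},\mathbf{Z}}$. Using $\int P_{i}^{\mathbf{X},\mathbf{Z}}(Y_{i}\mid\varepsilon_{i})\,dF_{i}(\varepsilon_{i}) = P_{i}^{\mathbf{X},\mathbf{Z}}(Y_{i})$, I would rewrite the mutual information (\ref{eq:mutualinfo}) as
\begin{equation*}
I(F_{i},P_{i}^{\mathbf{X},\mathbf{Z}}) = \int \sum_{Y_{i}} P_{i}^{\mathbf{X},\mathbf{Z}}(Y_{i}\mid\varepsilon_{i})\ln P_{i}^{\mathbf{X},\mathbf{Z}}(Y_{i}\mid\varepsilon_{i})\,dF_{i}(\varepsilon_{i}) \;-\; \sum_{Y_{i}} P_{i}^{\mathbf{X},\mathbf{Z}}(Y_{i})\ln P_{i}^{\mathbf{X},\mathbf{Z}}(Y_{i}),
\end{equation*}
which exposes cleanly how the channel enters the objective both directly and through the marginal $P_{i}^{\mathbf{X},\mathbf{Z}}(Y_{i})$.

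Next I would attach multipliers $\mu(\varepsilon_{i})$ to the pointwise normalization $\sum_{Y_{i}} P_{i}^{\mathbf{X},\mathbf{Z}}(Y_{i}\mid\varepsilon_{i})=1$ and take the Gateaux derivative with respect to $P_{i}^{\mathbf{X},\mathbf{Z}}(Y_{i}\mid\varepsilon_{i})$, tracking the indirect dependence on $P_{i}^{\mathbf{X},\mathbf{Z}}(Y_{i})$ by the chain rule. The key observation is that the ``$+1$'' from differentiating $P\ln P$ in the conditional-entropy term is cancelled exactly by the ``$+1$'' generated by differentiating the marginal-entropy term through $P_{i}^{\mathbf{X},\mathbf{Z}}(Y_{i})$. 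What remains is the first-order condition
\begin{equation*}
f_{i}(\varepsilon_{i})\left[v_{i}(Y_{i},\varepsilon_{i}) - \lambda_{i}\ln\frac{P_{i}^{\mathbf{X},\mathbf{Z}}(Y_{i}\mid\varepsilon_{i})}{P_{i}^{\mathbf{X},\mathbf{Z}}(Y_{i})}\right] = \mu(\varepsilon_{i}).
\end{equation*}
Assumption~\ref{assump:payoffshock} guarantees $f_{i}(\varepsilon_{i})>0$ almost everywhere, so I can divide through by $f_{i}(\varepsilon_{i})$, exponentiate, and then eliminate $\mu$ by summing over $Y_{i}$ and imposing the normalization. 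This produces exactly the logit-type formula in (\ref{eq:optimsol}), with the denominator being the normalizing constant. The interior hypothesis $P_{i}^{\mathbf{X},\mathbf{Z}}(Y_{i})>0$ is what permits dividing by $P_{i}^{\mathbf{X},\mathbf{Z}}(Y_{i})$ inside the logarithm and keeps the derivation in the regime of a smooth optimization problem.

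For sufficiency, I would invoke the standard fact that $P\mapsto I(F_{i},P)$ is convex in the channel with the input $F_{i}$ held fixed, so $-\lambda_{i} I(F_{i},\cdot)$ is concave; together with the linearity of the payoff term, the objective in (\ref{eq:optim}) is concave in $P_{i}^{\mathbf{X},\mathbf{Z}}$, so any stationary point satisfying the normalization and positivity constraints is a global maximizer, which gives the ``if'' direction. The main obstacle will be the bookkeeping of the indirect dependence of $P_{i}^{\mathbf{X},\mathbf{Z}}(Y_{i})$ on the channel in the Gateaux derivative; getting the cancellation of the two ``$+1$'' contributions right is what yields the elegant logit form rather than a messier expression, and is precisely where the argument differs from a naive entropy-regularized choice problem in which the marginal is treated as exogenous. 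Corner solutions where some $P_{i}^{\mathbf{X},\mathbf{Z}}(Y_{i})=0$ are ruled out by the proposition's hypothesis; otherwise one would need the consideration-set machinery of \citet{caplin2019} to complete the characterization.
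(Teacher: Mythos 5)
Your derivation matches the paper's appendix proof essentially step for step: the same Lagrangian with the mutual information written as conditional entropy minus marginal entropy, the same cancellation of the two $+1$ terms in the first-order condition, and the same elimination of the multiplier via the normalization constraint to reach the logit form. The one place you go beyond the paper's written proof is the sufficiency (``if'') direction, where your appeal to convexity of the mutual information in the channel (hence concavity of the objective, so that an interior stationary point is a global maximizer) is a correct and welcome addition, since the paper's own appendix only derives the necessary first-order condition and defers sufficiency to the cited literature.
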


\begin{proof}
    In the Appendix
\end{proof}

Next, I formally define the equilibrium of the game in the context of costly information acquisition.
The solution concept is the Bayesian-Nash equilibrium (BNE), which is the standard in the literature on incomplete information games.

\begin{definition}\label{def:equilibrium}
A Bayesian-Nash equilibrium (BNE) is a profile of strategies $(P_{i}^{\mathbf{X}, \mathbf{Z}}(Y_{i}))_{i \in \mathcal{I}}$ such that for each player $i \in \mathcal{I}$ with $j \neq i$,
\begin{enumerate}
    \item Given beliefs about the rival's action $b_{j}(Y_{j}|\mathbf{X}, \mathbf{Z})$, the strategy $P_{i}^{\mathbf{X}, \mathbf{Z}}$ solves the maximization problem \eqref{eq:optim}.
    \item Beliefs are consistent.
    That is, $b_{j}(Y_{j}|\mathbf{X}, \mathbf{Z})= P_{j}^{\mathbf{X}, \mathbf{Z}}(Y_{j})$.
\end{enumerate}
\end{definition}

The equilibrium concept in Definition~\ref{def:equilibrium} captures the strategic interactions between players in a game with costly information acquisition.
First, each player's strategy in a Bayesian-Nash equilibrium optimally balances the expected payoff from each action with the costs of acquiring information, conditional on the beliefs about rival's actions.
Second, consistency of beliefs ensures that each player's expectation about the rival's strategies aligns with the actual equilibrium strategies.
Finally, the equilibrium defined here generalizes the standard BNE concept to account for endogenous information acquisition, providing a foundation for empirical estimation of discrete games with costly information.

In the next subsection, I focus on the equilibrium in the choice probability space, which facilitates the characterization of players' strategies and the computation of equilibrium outcomes.
\subsection{Equilibrium Existence and Uniqueness}
Given the equilibrium strategies defined above, it is often convenient to represent the equilibrium in terms of conditional choice probabilities (CCPs), which aggregate over the unobserved payoff shocks. 
Associated with each player's equilibrium strategy, the CCP $P_{i}(Y_{i}|\mathbf{X}, \mathbf{Z}_{i}, \mathbf{Z}_{j})$ is defined as
\begin{equation}\label{eq:ccp-def}
    P_{i}(Y_{i}|\mathbf{X}, \mathbf{Z}_{i}, \mathbf{Z}_{j}) \equiv P_{i}^{\mathbf{X},\mathbf{Z}}(Y_{i}) 
    = \int_{\varepsilon_{i}} P_{i}^{\mathbf{X},\mathbf{Z}}(Y_{i}|\varepsilon_{i}) \, \mathrm{d}F_{i}(\varepsilon_{i}).
\end{equation}
Since the action is binary, the CCPs can be simplified to a single probability of choosing action $Y_{i}=1$: 
\begin{equation*}
    P_{i}(\mathbf{X}, \mathbf{Z}_{i}, \mathbf{Z}_{j}) \equiv P_{i}(Y_{i}=1|\mathbf{X}, \mathbf{Z}_{i}, \mathbf{Z}_{j}).
\end{equation*}

Moreover, the equilibrium condition can be expressed as a fixed-point equation.
Let  $\mathbf{P}(\mathbf{X}, \mathbf{Z}) = (P_{1}(\mathbf{X},\mathbf{Z}), P_{2}(\mathbf{X},\mathbf{Z}))'$ denote the vector of conditional choice probabilities.
For $i, j \in \mathcal{I}$ with $j \neq i$, define 
\begin{equation}\label{eq:psi-def}
    \psi_{i}(p_{i}, p_{j},\mathbf{X}, \mathbf{Z}) = \int_{\varepsilon_{i}} \frac{p_{i} \exp\left( \pi_{i}(\mathbf{X},\mathbf{Z}_{i}) + \delta_{i}(\mathbf{X},\mathbf{Z}_{i}) \cdot p_{j} + \varepsilon_{i} \right)^{1/\lambda_{i}}}{p_{i} \exp\left( \pi_{i}(\mathbf{X},\mathbf{Z}_{i}) + \delta_{i}(\mathbf{X},\mathbf{Z}_{i}) \cdot p_{j} + \varepsilon_{i} \right)^{1/\lambda_{i}} + 1 - p_{i}} \, \mathrm{d}F_{i}(\varepsilon_{i}).
\end{equation}
Stacking the two functions, define the mapping $\Psi:[0,1]^{2} \to [0,1]^{2}$ as:
\begin{equation}
    \Psi(p_{1}, p_{2}, \mathbf{X}, \mathbf{Z}) =
    \begin{pmatrix}
        \psi_{1}(p_{1}, p_{2}, \mathbf{X}, \mathbf{Z}) \\
        \psi_{2}(p_{2}, p_{1}, \mathbf{X}, \mathbf{Z})
    \end{pmatrix}.
\end{equation}
From equations~\eqref{eq:optimsol} and \eqref{eq:ccp-def}, the equilibrium choice probabilities satisfy 
\begin{equation*}
    P_{i}(\mathbf{X},\mathbf{Z}) = \psi_{i}(P_{i}(\mathbf{X}, \mathbf{Z}), P_{j}(\mathbf{X},\mathbf{Z}), \mathbf{X}, \mathbf{Z}).
\end{equation*}
Thus, the equilibrium conditional choice probabilities are characterized as the fixed-point solution of
\begin{equation}\label{eq:fixedpoint}
    \mathbf{P}(\mathbf{X},\mathbf{Z}) = 
    \Psi(\mathbf{P}(\mathbf{X}, \mathbf{Z}), \mathbf{X}, \mathbf{Z}).
\end{equation}
The following proposition establishes existence of a Bayesian Nash equilibrium.
The proof follows from Brouwer's fixed-point theorem.

\begin{proposition}
    Under the Assumption 1, there exists a Bayesian Nash equilibrium of the games with costly information acquisition.
\end{proposition}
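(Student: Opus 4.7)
The plan is to apply Brouwer's fixed-point theorem to the mapping $\Psi:[0,1]^{2}\to[0,1]^{2}$ defined in equation~\eqref{eq:psi-def}, whose fixed points are exactly the Bayesian Nash equilibria in conditional choice probabilities by the derivation leading to \eqref{eq:fixedpoint}. The domain $[0,1]^{2}$ is non-empty, convex, and compact, so it remains to verify that $\Psi$ is a continuous self-map.

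First I would check that $\Psi$ maps $[0,1]^{2}$ into itself. For each fixed $(\mathbf{X},\mathbf{Z},\varepsilon_{i})$, the integrand
\[
g_{i}(\varepsilon_{i},p_{i},p_{j}) = \frac{p_{i}\exp\!\left(\pi_{i}(\mathbf{X},\mathbf{Z}_{i})+\delta_{i}(\mathbf{X},\mathbf{Z}_{i})p_{j}+\varepsilon_{i}\right)^{1/\lambda_{i}}}{p_{i}\exp\!\left(\pi_{i}(\mathbf{X},\mathbf{Z}_{i})+\delta_{i}(\mathbf{X},\mathbf{Z}_{i})p_{j}+\varepsilon_{i}\right)^{1/\lambda_{i}}+1-p_{i}}
\]
is a convex combination style expression taking values in $[0,1]$ whenever $p_{i}\in[0,1]$ (at $p_{i}=0$ it equals $0$, at $p_{i}=1$ it equals $1$, and in between the ratio is bounded above by $1$ and below by $0$). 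Integrating against the probability measure $F_{i}$ preserves the bound, so $\psi_{i}(p_{i},p_{j},\mathbf{X},\mathbf{Z})\in[0,1]$ and $\Psi$ is indeed a self-map of $[0,1]^{2}$.

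The main technical step is continuity of $\Psi$ on $[0,1]^{2}$. For each fixed $\varepsilon_{i}$, the integrand $g_{i}(\varepsilon_{i},\cdot,\cdot)$ is continuous on the closed square $[0,1]^{2}$ (the denominator never vanishes: it is at least $\min\{p_{i}\exp(\cdot)^{1/\lambda_{i}},1-p_{i}\}+$ something nonnegative, and the two boundary cases are handled by inspection). Since $|g_{i}|\le 1$ and $F_{i}$ is a probability measure, the constant function $1$ is an integrable envelope, so the Dominated Convergence Theorem lets me exchange limits and integration: for any sequence $(p_{i}^{(n)},p_{j}^{(n)})\to(p_{i},p_{j})$,
\[
\psi_{i}(p_{i}^{(n)},p_{j}^{(n)},\mathbf{X},\mathbf{Z}) = \int g_{i}(\varepsilon_{i},p_{i}^{(n)},p_{j}^{(n)})\,dF_{i}(\varepsilon_{i}) \longrightarrow \int g_{i}(\varepsilon_{i},p_{i},p_{j})\,dF_{i}(\varepsilon_{i}).
\]
Hence $\psi_{i}$, and thus $\Psi$, is continuous. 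Brouwer's fixed-point theorem then delivers some $\mathbf{P}^{\ast}(\mathbf{X},\mathbf{Z})\in[0,1]^{2}$ with $\mathbf{P}^{\ast}=\Psi(\mathbf{P}^{\ast},\mathbf{X},\mathbf{Z})$, which, by Proposition~\ref{prop:optimsol} together with Definition~\ref{def:equilibrium}, constitutes a Bayesian Nash equilibrium of the game with costly information acquisition.

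The main obstacle I anticipate is purely a bookkeeping one: verifying continuity of $g_{i}$ uniformly enough on the closed unit square, particularly at the corners $p_{i}\in\{0,1\}$ where the ratio degenerates. Assumption~\ref{assump:payoffshock} guarantees that $F_{i}$ admits a density so no atoms interfere with the DCT argument, and the bound $|g_{i}|\le 1$ removes any need for extra moment conditions on $\varepsilon_{i}$. No uniqueness is claimed, which is consistent with the fact that games of this type typically admit multiple equilibria.
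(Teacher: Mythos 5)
Your proof takes exactly the same route as the paper: the paper's Appendix proof also defines the conditional-choice-probability map on $[0,1]^{2}$ and invokes Brouwer's fixed-point theorem, though it simply asserts that the hypotheses hold, whereas you supply the verification (the self-map bound $0\le g_{i}\le 1$, the nonvanishing denominator, and the dominated-convergence argument for continuity). The argument is correct and in fact more complete than the paper's own write-up.
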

\begin{proof}
    In the Appendix
\end{proof}

As shown in Figure \ref{fig:unique-equil}, the multiple solutions to equation~\eqref{eq:fixedpoint} may arise, a common feature in discrete choice games.
\citet{aradillas2010} investigates the conditions under which the equilibrium is unique.
Building on Gale-Nikaido conditions \citep{gale1965}, \citet{aradillas2010} derives the sufficient conditions for equilibrium uniqueness.
The following proposition formalizes uniqueness conditions in the context of the game with costly information acquisition.

\begin{proposition}\label{prop:unique}
    Let $\nabla_{\mathbf{P}}\Psi(\mathbf{P}(\mathbf{X},\mathbf{Z}),\mathbf{X},\mathbf{Z})$ denote the Jacobian of \eqref{eq:psi-def} evaluated at the equilibrium choice probabilities $\mathbf{P}(\mathbf{X},\mathbf{Z})$.
    The solution to the equation 
    \begin{equation*}
        \mathbf{P}(\mathbf{X}, \mathbf{Z}) - \Psi(\mathbf{P}(\mathbf{X}, \mathbf{Z}), \mathbf{X}, \mathbf{Z}) = 0
    \end{equation*}
    is unique if no principal minors of the matrix
    \begin{equation}\label{eq:jacobian}
        I_{2\times 2} - \nabla_{\mathbf{P}} \Psi(\mathbf{P}(\mathbf{X}, \mathbf{Z}), \mathbf{X}, \mathbf{Z})
    \end{equation}
    are zero, where $I_{2 \times 2}$ is the 2-dimensional identity matrix.
    That is,
    \begin{enumerate}
        \item the $(1,1)$ element of the matrix~\eqref{eq:jacobian} is positive, i.e., $1 - \dfrac{\partial\psi_{1}}{\partial P_{1}}\bigg|_{\mathbf{P}(\mathbf{X},\mathbf{Z})} > 0$,
        \item the $(2,2)$ element of the matrix~\eqref{eq:jacobian} is positive, i.e., $1 - \dfrac{\partial\psi_{2}}{\partial P_{2}}\bigg|_{\mathbf{P}(\mathbf{X},\mathbf{Z})} > 0$, and
        \item the determinant of the matrix~\eqref{eq:jacobian} is positive, i.e., $\det\!\big(I_{2\times2} - \nabla_{\mathbf{P}}\Psi(\mathbf{P}(\mathbf{X},\mathbf{Z}),\mathbf{X},\mathbf{Z})\big) > 0$.
    \end{enumerate}
\end{proposition}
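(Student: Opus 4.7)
The plan is to invoke the univalence theorem of Gale and Nikaido (1965) applied to the map
\[
F(\mathbf{P};\mathbf{X},\mathbf{Z}) \;=\; \mathbf{P}(\mathbf{X},\mathbf{Z}) - \Psi(\mathbf{P}(\mathbf{X},\mathbf{Z}),\mathbf{X},\mathbf{Z})
\]
on the rectangle $[0,1]^{2}$, exactly in the spirit of \citet{aradillas2010}. Existence of an equilibrium, i.e.\ of a zero of $F$, is already established by the preceding proposition via Brouwer; what is needed here is \emph{at most one} zero. The Gale--Nikaido univalence theorem states that if $F:[0,1]^{2}\to\mathbb{R}^{2}$ is continuously differentiable and its Jacobian $\nabla F$ is a P-matrix at every point of the domain (equivalently, all its principal minors are strictly positive), then $F$ is injective on $[0,1]^{2}$. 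The Jacobian of $F$ is precisely $I_{2\times 2}-\nabla_{\mathbf{P}}\Psi(\mathbf{P}(\mathbf{X},\mathbf{Z}),\mathbf{X},\mathbf{Z})$, so the three conditions in the statement are exactly the P-matrix conditions for a $2\times 2$ matrix: the two diagonal entries are the order-one principal minors, and the determinant is the unique order-two principal minor.

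Concretely, I would proceed in the following steps. First, verify that $\Psi$ is continuously differentiable in $(p_{1},p_{2})$ on $[0,1]^{2}$. This follows from the expression in \eqref{eq:psi-def}: the integrand is smooth in $(p_{1},p_{2},\varepsilon_{i})$, the denominator is strictly positive by construction, and Assumption~\ref{assump:payoffshock} together with standard dominated-convergence arguments permits differentiation under the integral sign. Second, compute (or at least invoke the existence of) the Jacobian $\nabla_{\mathbf{P}}\Psi$ and form $I_{2\times 2}-\nabla_{\mathbf{P}}\Psi$. Third, observe that the three inequalities in the statement are exactly the assertion that $I_{2\times 2}-\nabla_{\mathbf{P}}\Psi$ is a P-matrix at the candidate equilibrium. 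Fourth, strengthen the hypothesis (as in \citet{aradillas2010}) from pointwise at $\mathbf{P}(\mathbf{X},\mathbf{Z})$ to holding uniformly on $[0,1]^{2}$, so that the Gale--Nikaido hypothesis is met on the whole rectangle; by continuity, if it fails at some other point then standard arguments (continuity of the principal minors and connectedness of $[0,1]^{2}$) show that uniqueness can still be secured locally around the fixed point, and combined with existence, globally on the domain.

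Fifth and finally, apply the Gale--Nikaido univalence theorem to conclude that $F$ is injective on $[0,1]^{2}$, so $F^{-1}(\mathbf{0})$ contains at most one element. Combined with the existence proposition, this gives the desired uniqueness of equilibrium CCPs $\mathbf{P}(\mathbf{X},\mathbf{Z})$.

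The main obstacle I anticipate is the passage from ``P-matrix condition at the equilibrium'' to ``P-matrix condition everywhere on $[0,1]^{2}$'', which is what the Gale--Nikaido theorem literally requires. The statement as given only imposes positivity at the equilibrium point $\mathbf{P}(\mathbf{X},\mathbf{Z})$. The cleanest route is to follow \citet{aradillas2010} and re-read the conditions as sufficient for \emph{local} uniqueness via the inverse function theorem: since the Jacobian of $F$ is a P-matrix (hence nonsingular) at the fixed point, $F$ is a local diffeomorphism there; global uniqueness then follows either by strengthening the hypothesis to all of $[0,1]^{2}$ or, alternatively, by combining local injectivity with the index-theoretic argument that any additional fixed point would contradict the positive determinant of $I_{2\times 2}-\nabla_{\mathbf{P}}\Psi$ under the same bound. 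I would present both interpretations but emphasize the Gale--Nikaido route, which is the standard and most transparent in this literature.
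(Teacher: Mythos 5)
Your proposal follows exactly the route the paper intends: the text preceding the proposition points to the Gale--Nikaido univalence theorem as adapted by \citet{aradillas2010}, and your reading of the three conditions as the P-matrix conditions for the $2\times2$ Jacobian of $\mathbf{P}-\Psi$ is the right one. Note, however, that the paper's appendix contains no proof of this proposition at all (it jumps from Proposition 2 to Proposition 4), so your write-up is in effect supplying the missing argument rather than paralleling an existing one.

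You have also correctly identified the one genuine gap, which lies in the statement itself: \citet{gale1965} requires the Jacobian of $F(\mathbf{p})=\mathbf{p}-\Psi(\mathbf{p},\mathbf{X},\mathbf{Z})$ to be a P-matrix at \emph{every} point of the rectangle $[0,1]^{2}$, whereas the proposition imposes positivity of the principal minors only at the equilibrium $\mathbf{P}(\mathbf{X},\mathbf{Z})$. Your proposed repair in step four is only half right, though. The claim that local injectivity at the fixed point ``combined with existence'' yields global uniqueness by continuity and connectedness does not hold: two isolated fixed points can each be locally unique, so nothing in that argument rules out a second solution away from the first. The two legitimate fixes are the ones you gesture at elsewhere: either (i) strengthen the hypothesis so the three inequalities hold for all $(p_{1},p_{2})\in[0,1]^{2}$ and invoke Gale--Nikaido directly, which is what \citet{aradillas2010} does; or (ii) keep the pointwise hypothesis but require it at every solution of the fixed-point equation and run the degree-theoretic argument --- since $\Psi$ maps $[0,1]^{2}$ into itself, the Brouwer degree of $F$ on the square is $+1$, it equals the sum of $\operatorname{sign}\det(I_{2\times2}-\nabla_{\mathbf{P}}\Psi)$ over all zeros, and if each summand is $+1$ there can be only one zero. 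Either of these completes the proof; the ``local plus connectedness'' shortcut does not. I would also verify differentiation under the integral in \eqref{eq:psi-def} explicitly, as you suggest, since the integrand's denominator is bounded below by $\min\{p_{i},1-p_{i}\}$ only in the interior, which is where the equilibrium lives by Proposition 1.
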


The third condition in Proposition~\ref{prop:unique} is automatically satisfied if $\delta_{1}(\mathbf{X}, \mathbf{Z}_{1}) \times \delta_{2}(\mathbf{X}, \mathbf{Z}_{2}) < 0$, or if one player's action is strategic substitute and the other's is strategic complement. 
The panel (c) in Figure~\ref{fig:unique-equil} illustrates this case. 
The sign of $\delta_{i}(\mathbf{X}, \mathbf{Z}_{i})$ determines the slope of player $i$'s best response function.
A positive $\delta_{i}(\mathbf{X}, \mathbf{Z}_{i})$ indicates that the best response function is increasing, while a negative $\delta_{i}(\mathbf{X}, \mathbf{Z}_{i})$ indicates that it is decreasing.
Thus, when the two slopes have opposite signs, the best response functions are monotone in opposite directions and intersect at one point.

\begin{figure}[t!]
    \centering
    \caption{Unique Equilibrium and Multiple Equilibria}
    \begin{subfigure}[b]{0.495\textwidth}
        \centering
        \includegraphics[width=\linewidth]{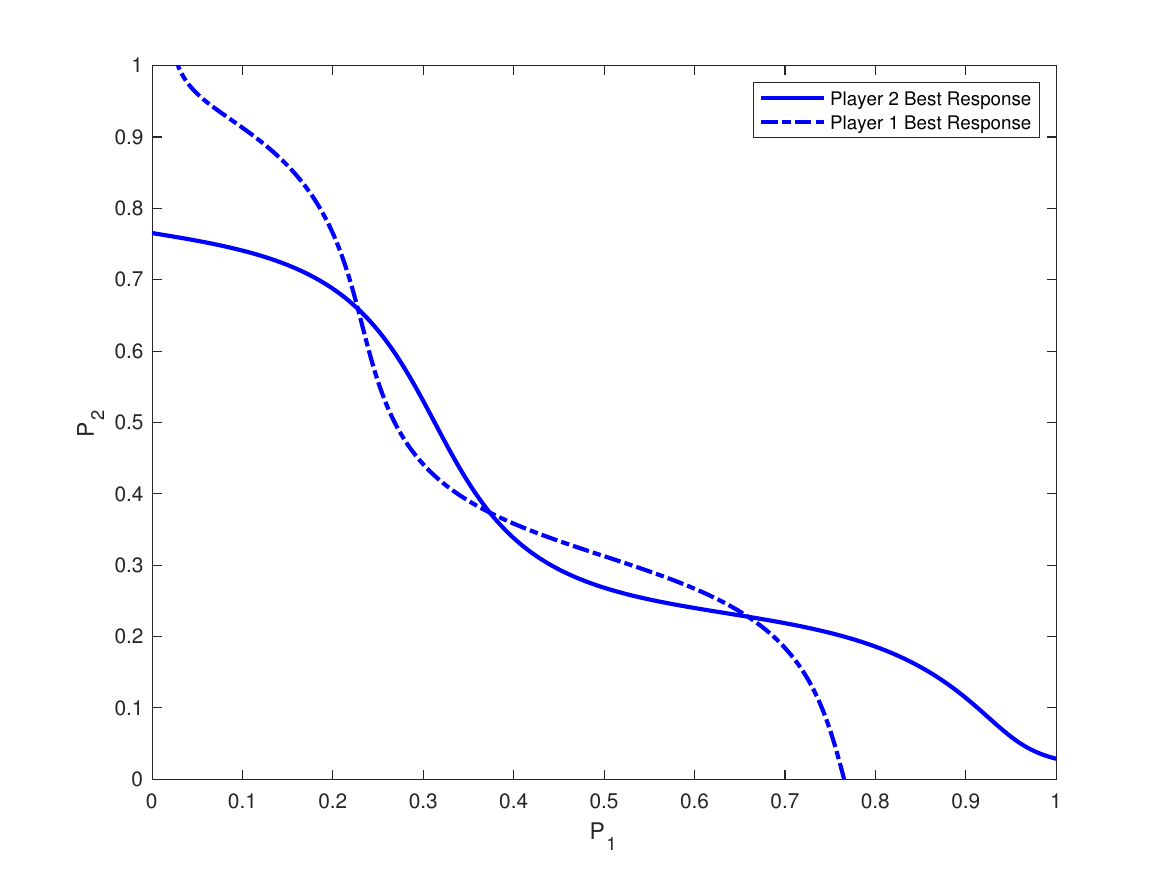}
        \caption{{\small Multiple Equilibria, $\delta_{1}\cdot\delta_{2}>0$}}    
    \end{subfigure}
    \hfill
    \begin{subfigure}[b]{0.495\textwidth}
        \centering
        \includegraphics[width=\linewidth]{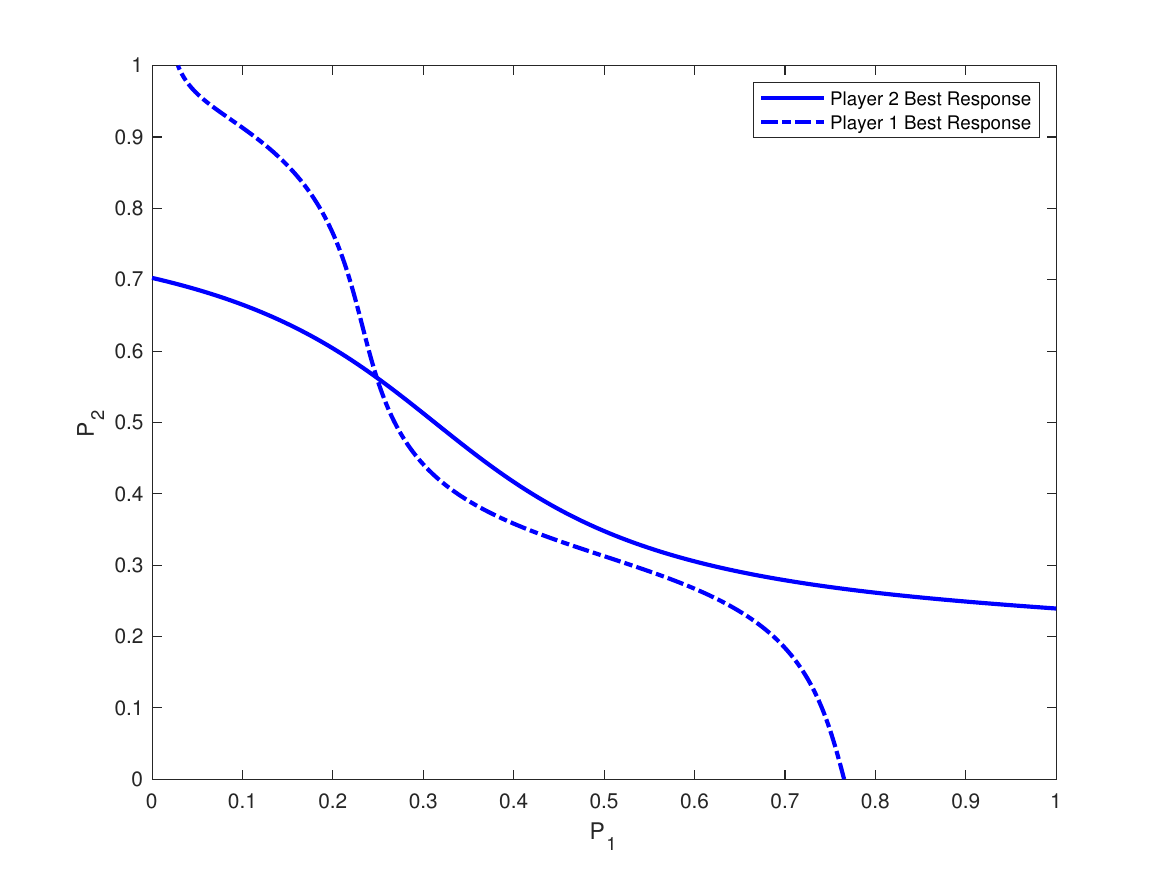}
        \caption{{\small Unique Equilibrium, $\delta_{1}\cdot\delta_{2}>0$}}
    \end{subfigure}
    \begin{subfigure}[b]{0.495\textwidth}
        \centering
        \includegraphics[width=\linewidth]{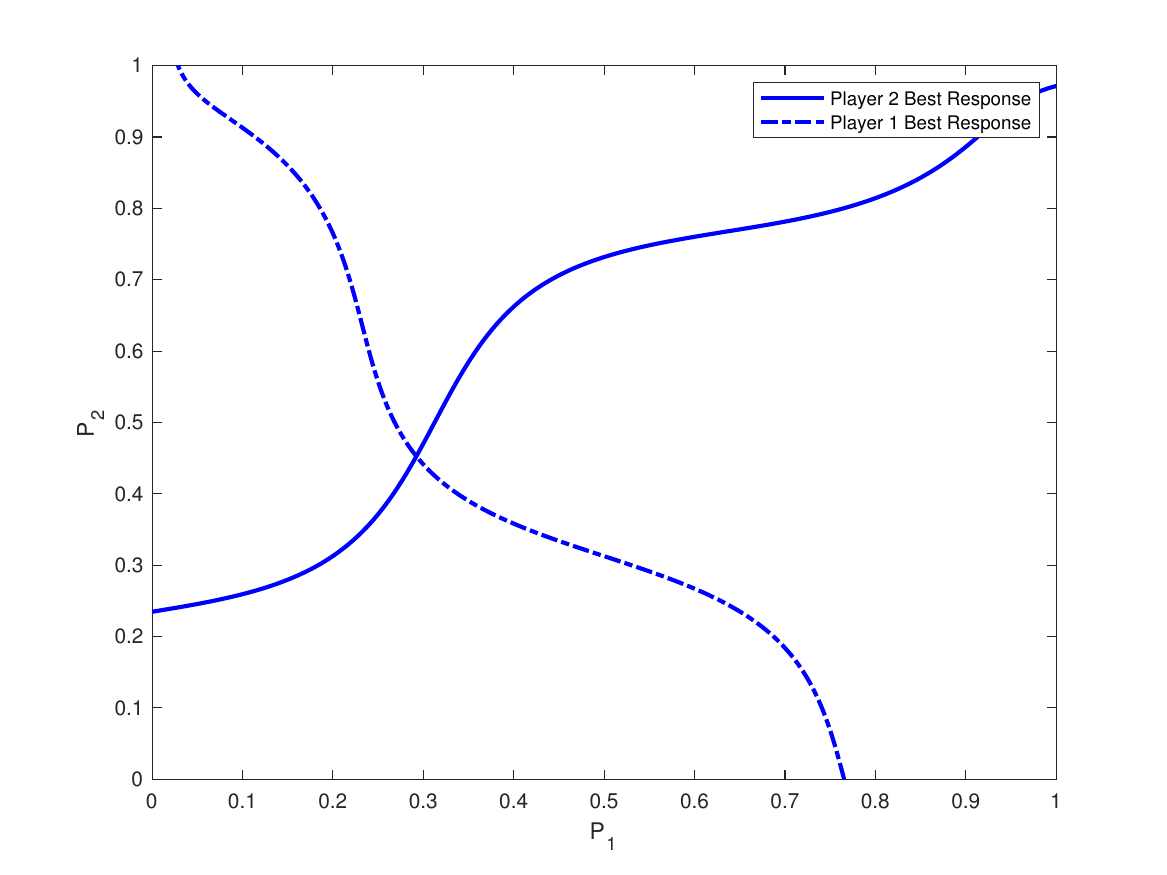}
        \caption{{\small Unique Equilibrium, $\delta_{1}\cdot\delta_{2}<0$}}
    \end{subfigure}
    \begin{minipage}
        {0.9\textwidth}
        \footnotesize
        \textit{Notes:} The figures illustrate the best response functions.
        The horizontal axis represents player 1's choice probability $P_{1}(\mathbf{X},\mathbf{Z})$, and the vertical axis represents player 2's choice probability $P_{2}(\mathbf{X},\mathbf{Z})$.
        The dotted line represents player 1's best response function, and the solid line represents player 2's best response function.
        The intersection points of the two lines indicate the equilibria of the game.
    \end{minipage}
    \label{fig:unique-equil}
\end{figure}

\section{Identification}
\label{sec:identification}
In this section, I describe the identification result of model primitives. 
First, I present the data generating process and the identification conditions. 
Next, I demonstrate that the model primitives, the payoff function, can be identified.

\subsection{Data Generating Process and Identification Conditions}
Suppose that an econometrician observes an independent dataset of $M$ different markets, indexed by $m = 1,2,\ldots, M$.
The dataset $\{ (\mathbf{y}_{m}', \mathbf{x}_{m}', \mathbf{z}_{1m}', \mathbf{z}_{2m}')'\}_{m=1}^{M} $ consists of the realized action profiles $\mathbf{y}_{m}$, vectors of market-specific variables $\mathbf{x}_{m}$, and a vector of player-specific variables $\mathbf{z}_{im}$ for each market $m=1,2,\ldots,M$.
I assume that both the market-specific and player-specific variables are observable to the econometrician.

\begin{assumption}\label{assump:dgp}
    The realizations of the vector $\{(\mathbf{y}_{m}', \mathbf{x}_{m}', \mathbf{z}_{1m}', \mathbf{z}_{2m}')'\}_{m=1}^{M}$ are independent and identically distributed across markets and observable to the econometrician.
\end{assumption}

Let $Z_{ik}$ denote the $k$-th variable in $\mathbf{Z}_{i}$, and 
let $\mathbf{Z}_{i,-k} = (Z_{i1}, \ldots, Z_{i,k-1}, Z_{i,k+1}, \dots, Z_{id_{z}})'$ denote the vector of $\mathbf{Z}_{i}$ excluding $Z_{ik}$.
I assume that the variable $Z_{ik}$ varies over its support $\mathcal{Z}_{ik}$, conditional on player $i$'s other variables $\mathbf{Z}_{i,-k}$, the rival's variables $\mathbf{Z}_{j}$, and the market-specific variables $\mathbf{X}$.

The supports of each $\mathbf{Z}_{i} \in \mathcal{Z}_{i} \subseteq \mathbb{R}^{d_{z}}$ can be either discrete or continuous.
If $\mathbf{Z}_{i}$ has a continuous support, then at least one of player-specific variables, $Z_{ik}$ for some $k$, is continuously distributed for $d_{z} \geq 1$.
In other words, the support of $\mathcal{Z}_{ik}$ contains a non-degenerate interval.

I further assume that the vector of player-specific variables $\mathbf{Z}_{i}$ enters only player $i$'s payoff function and is excluded from the rival's payoff function.
Changes in $\mathbf{Z}_{i}$ affect only player $i$'s payoff and do not directly affect the rival's payoff.
This assumption plays a key role in identification and is referred to as exclusion restriction.
Assumption \ref{assump:exclusion} summarizes the conditions on the player-specific variable $\mathbf{Z}_{i}$.

\begin{assumption}\label{assump:exclusion}
    \begin{enumerate}
        \item The player-specific variable $\mathbf{Z}_{i}$ can be discrete or continuous, and $\mathbf{Z}_{i}$ enters player $i$'s payoff but is excluded from the payoffs of other player.
        \item The variable $Z_{ik}$ varies over its support $\mathcal{Z}_{ik}$, conditional on $(\mathbf{X}', \mathbf{Z}_{i,-k}', \mathbf{Z}_{j}')'$.
    \end{enumerate}
\end{assumption}

Next, I assume that the equilibrium conditional choice probabilities (CCPs) can be consistently estimated or identified from the data.
Specifically, each player's conditional choice probabilities $P_{i}(\mathbf{X}, \mathbf{Z}_{i}, \mathbf{Z}_{j})$ are assumed to be known to the econometrician.
As defined in equation \eqref{eq:ccp-def}, the CCPs vary with both player-specific variables $(\mathbf{Z}_{i}', \mathbf{Z}_{j}')'$.

\begin{assumption}\label{assump:ccp}
    The conditional choice probabilities $P_{i}(\mathbf{X}, \mathbf{Z}_{i}, \mathbf{Z}_{j})$ vary with $(\mathbf{X}', \mathbf{Z}_{i}', \mathbf{Z}_{j}')'$. 
    Furthermore, the CCPs are identified or consistently estimated from the data.
\end{assumption}

Finally, I derive the relationship between equilibrium CCPs and the model primitives.
From the equations~\eqref{eq:psi-def} through \eqref{eq:fixedpoint}, the equilibrium CCPs satisfy
    \begin{equation}\label{eq:fixedpoint-full}
        P_{i}(\mathbf{X}, \mathbf{Z}) = \int_{\varepsilon_{i}} \frac{P_{i}(\mathbf{X}, \mathbf{Z}) \exp\left( \pi_{i}(\mathbf{X},\mathbf{Z}_{i}) + \delta_{i}(\mathbf{X}, \mathbf{Z}_{i})P_{j}(\mathbf{X},\mathbf{Z}) + \varepsilon_{i} \right)^{1/\lambda_{i}}}{P_{i}(\mathbf{X}, \mathbf{Z}) \exp\left( \pi_{i}(\mathbf{X},\mathbf{Z}_{i}) + \delta_{i}(\mathbf{X}, \mathbf{Z}_{i})P_{j}(\mathbf{X},\mathbf{Z}) + \varepsilon_{i} \right)^{1/\lambda_{i}} + (1 - P_{i}(\mathbf{X}, \mathbf{Z})) } \,\mathrm{d}F_{i}(\varepsilon_{i}),
    \end{equation}
for $i,j \in \mathcal{I}$ with $j \neq i$.
The terms in the exponent represent player $i$'s expected payoff, and the deterministic component of the expected payoff is
\begin{equation}\label{eq:deterministicpayoff}
    EU_{i}(\mathbf{X}, \mathbf{Z}) = \pi_{i}(\mathbf{X}, \mathbf{Z}_{i}) + \delta_{i}(\mathbf{X}, \mathbf{Z}_{i}) P_{j}(\mathbf{X}, \mathbf{Z}).
\end{equation}
Since the integrand of \eqref{eq:fixedpoint-full} is strictly increasing in $EU_{i}(\mathbf{X}, \mathbf{Z})$ and the prior belief $F_{i}(\varepsilon_{i})$ is bijective, there exists a strictly increasing function $G_{i}(\cdot)$ such that 
\begin{equation}\label{eq:ccp-inverse}
    P_{i}(\mathbf{X}, \mathbf{Z}) = G_{i} \left(\frac{1}{\lambda_{i}}EU_{i}(\mathbf{X}, \mathbf{Z}) \right) = G_{i}\left( \frac{1}{\lambda_{i}} \left( \pi_{i}(\mathbf{X}, \mathbf{Z}_{i}) + \delta_{i}(\mathbf{X}, \mathbf{Z}_{i}) P_{j}(\mathbf{X}, \mathbf{Z}) \right) \right).
\end{equation}
This equation shows that the equilibrium CCPs can be expressed as a function of the players' deterministic components of expected payoffs, providing a key link between the CCPs and the underlying structure of the model.

\subsection{Identification Result: Semiparametric Model}
Given these assumptions, I now turn to the identification of model primitives.
In this subsection, the model primitives --- the base payoffs $\pi_{i}(\cdot)$ and the competitive effects $\delta_{i}(\cdot)$ --- are nonparametric functions, and I demonstrate that they are identified.

Fix an arbitrary realization of market-specific variables and player $i$'s variables, $\mathbf{X} = \mathbf{x}$ and $\mathbf{Z}_{i} = \mathbf{z}_{i}$.
Given the equilibrium choice probabilities $P_{i}(\mathbf{x}, \mathbf{z}_{i}, \mathbf{Z}_{j})$, I can apply the mapping in equation~\eqref{eq:ccp-inverse} to recover $EU_{i}(\mathbf{x}, \mathbf{z}_{i}, \mathbf{Z}_{j})$.
Recall that $G_{i}(\cdot)$ is a bijective function and hence invertible.
Therefore,
\begin{equation}\label{eq:semiparam-id}
    G_{i}^{-1}(P_{i}(\mathbf{x}, \mathbf{z}_{i}, \mathbf{Z}_{j})) = \frac{1}{\lambda_{i}} \left( \pi_{i}(\mathbf{x}, \mathbf{z}_{i}) + \delta_{i}(\mathbf{x}, \mathbf{z}_{i}) P_{j}(\mathbf{x}, \mathbf{Z}_{j}, \mathbf{z}_{i}) \right).
\end{equation}

Given that $\lambda_{i}>0$, model primitives $\{\pi_{i}(\cdot),\delta_{i}(\cdot), \lambda_{i}\}$ are only identified up to scale.
Multiplying $\pi_{i}(\cdot)$, $\delta_{i}(\cdot)$ and $\lambda_{i}$ by a positive constant $c>0$ does not change the right-hand side of equation \eqref{eq:semiparam-id}.
I therefore normalize the unit cost of information $\lambda_{i}$ to one.
Under this normalization, the model primitives, $\pi_{i}(\cdot)$, and $\delta_{i}(\cdot)$ can be identified.

A necessary condition for identification of $\pi_{i}(\cdot)$ and $\delta_{i}(\cdot)$ is that, for each $\mathbf{x}$ and $\mathbf{z}_{i}$, there exist at least two distinct points, $\mathbf{z}_{j}^{1}$ and $\mathbf{z}_{j}^{2}$ in the support of $\mathbf{Z}_{j}$ conditional on $\mathbf{x}$ and $\mathbf{z}_{i}$.
A sufficient condition for identification is that the two distinct points satisfy the following:
\begin{equation*}
    P_{j}(\mathbf{x}, \mathbf{z}_{j}^{1}, \mathbf{z}_{i}) \neq P_{j}(\mathbf{x}, \mathbf{z}_{j}^{2}, \mathbf{z}_{i}).
\end{equation*}
By Assumption ~\ref{assump:ccp}, the conditional choice probabilities $P_{i}(\mathbf{x}, \mathbf{z}_{i}, \mathbf{Z}_{j})$ and $P_{j}(\mathbf{x}, \mathbf{Z}_{j}, \mathbf{z}_{i})$, vary sufficiently with $\mathbf{Z}_{j}$, so the inequality condition holds.
Therefore, variation in $P_{i}(\mathbf{x}, \mathbf{z}_{i}, \mathbf{Z}_{j})$ and $P_{j}(\mathbf{x}, \mathbf{Z}_{j}, \mathbf{z}_{i})$ across $\mathbf{Z}_j$ allows the identification of $\pi_{i}(\cdot)$ and $\delta_{i}(\cdot)$.

\begin{proposition}
    Suppose that Assumptions 1 through 4 hold and the unit cost of information $\lambda_{i}$ is normalized to 1.
    Suppose further that, for each $\mathbf{x} \in \mathcal{X}$ and $\mathbf{z}_{i} \in \mathcal{Z}_{i}$, there exist at least two distinct points, $\mathbf{z}_{j}^{1}$ and $\mathbf{z}_{j}^{2}$ in the support of $\mathbf{Z}_{j}$ conditional on $\mathbf{x}$ and $\mathbf{z}_{i}$, such that 
    \begin{equation*}
        P_{j}(\mathbf{x}, \mathbf{z}_{j}^{1}, \mathbf{z}_{i}) \neq P_{j}(\mathbf{x}, \mathbf{z}_{j}^{2}, \mathbf{z}_{i}).
    \end{equation*}
    Then, for each $i \in \mathcal{I}$, the base payoff $\pi_{i}(\mathbf{X}, \mathbf{Z}_{i})$ and the competitive effects $\delta_{i}(\mathbf{X}, \mathbf{Z}_{i})$ are nonparametrically identified.
\end{proposition}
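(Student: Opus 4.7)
The plan is to use equation~\eqref{eq:ccp-inverse} as the identifying equation and exploit the exclusion restriction together with the postulated variation in $\mathbf{Z}_{j}$ to reduce identification, at each fixed $(\mathbf{x},\mathbf{z}_{i})$, to an invertible $2\times 2$ linear system in $(\pi_{i}(\mathbf{x},\mathbf{z}_{i}),\delta_{i}(\mathbf{x},\mathbf{z}_{i}))$. Throughout I impose the normalization $\lambda_{i}=1$, and I treat the link function $G_{i}(\cdot)$ as known: it is determined by the prior $F_{i}$ through the fixed-point display in \eqref{eq:fixedpoint-full} and is strictly increasing (by strict monotonicity of the integrand in $EU_{i}$ together with Assumption~\ref{assump:payoffshock}), so $G_{i}^{-1}$ is well-defined on the range of observed CCPs.

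First I would apply $G_{i}^{-1}$ to both sides of \eqref{eq:ccp-inverse} to obtain, for every realization $(\mathbf{x},\mathbf{z}_{i},\mathbf{z}_{j})$ in the support of $(\mathbf{X},\mathbf{Z}_{i},\mathbf{Z}_{j})$,
\begin{equation*}
G_{i}^{-1}\!\bigl(P_{i}(\mathbf{x},\mathbf{z}_{i},\mathbf{z}_{j})\bigr)=\pi_{i}(\mathbf{x},\mathbf{z}_{i})+\delta_{i}(\mathbf{x},\mathbf{z}_{i})\,P_{j}(\mathbf{x},\mathbf{z}_{j},\mathbf{z}_{i}).
\end{equation*}
By Assumption~\ref{assump:ccp}, the CCPs $P_{i}(\mathbf{x},\mathbf{z}_{i},\mathbf{z}_{j})$ and $P_{j}(\mathbf{x},\mathbf{z}_{j},\mathbf{z}_{i})$ are identified from the data, so both the left-hand side and the slope variable $P_{j}(\mathbf{x},\mathbf{z}_{j},\mathbf{z}_{i})$ are directly observable functionals of the data distribution.

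Next I would fix an arbitrary $(\mathbf{x},\mathbf{z}_{i})\in\mathcal{X}\times\mathcal{Z}_{i}$ and invoke the exclusion restriction in Assumption~\ref{assump:exclusion}: neither $\pi_{i}(\mathbf{x},\mathbf{z}_{i})$ nor $\delta_{i}(\mathbf{x},\mathbf{z}_{i})$ depends on $\mathbf{z}_{j}$, so moving $\mathbf{Z}_{j}$ along its conditional support shifts only the rival's CCP on the right-hand side. Evaluating the display at the two postulated points $\mathbf{z}_{j}^{1}$ and $\mathbf{z}_{j}^{2}$ yields the linear system
\begin{equation*}
\begin{pmatrix} G_{i}^{-1}(P_{i}(\mathbf{x},\mathbf{z}_{i},\mathbf{z}_{j}^{1})) \\ G_{i}^{-1}(P_{i}(\mathbf{x},\mathbf{z}_{i},\mathbf{z}_{j}^{2})) \end{pmatrix}
=\begin{pmatrix} 1 & P_{j}(\mathbf{x},\mathbf{z}_{j}^{1},\mathbf{z}_{i}) \\ 1 & P_{j}(\mathbf{x},\mathbf{z}_{j}^{2},\mathbf{z}_{i}) \end{pmatrix}
\begin{pmatrix} \pi_{i}(\mathbf{x},\mathbf{z}_{i}) \\ \delta_{i}(\mathbf{x},\mathbf{z}_{i}) \end{pmatrix}.
\end{equation*}
The determinant of the coefficient matrix equals $P_{j}(\mathbf{x},\mathbf{z}_{j}^{2},\mathbf{z}_{i})-P_{j}(\mathbf{x},\mathbf{z}_{j}^{1},\mathbf{z}_{i})$, which is nonzero by the hypothesis of the proposition. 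Hence the system admits a unique solution expressing $\pi_{i}(\mathbf{x},\mathbf{z}_{i})$ and $\delta_{i}(\mathbf{x},\mathbf{z}_{i})$ as closed-form functions of identified objects. Since $(\mathbf{x},\mathbf{z}_{i})$ was arbitrary in $\mathcal{X}\times\mathcal{Z}_{i}$, the two functions $\pi_{i}(\cdot,\cdot)$ and $\delta_{i}(\cdot,\cdot)$ are nonparametrically identified on their entire domains; an identical argument with roles swapped delivers $\pi_{j}(\cdot,\cdot)$ and $\delta_{j}(\cdot,\cdot)$.

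The main obstacle I anticipate is justifying that $G_{i}$ can be treated as a known, invertible map in the semiparametric setting. I would devote a careful preliminary step to this: from \eqref{eq:fixedpoint-full} the right-hand side, as a function of $EU_{i}$ for given $F_{i}$ and $\lambda_{i}=1$, is continuous, bounded, and strictly increasing with limits $0$ and $1$, so it defines a strictly increasing bijection $G_{i}:\mathbb{R}\to(0,1)$ with $G_{i}^{-1}$ well-defined on the open unit interval. Under the convention that $F_{i}$ is known (as in the parametric-$F_{i}$, nonparametric-payoff reading of ``semiparametric''), the function $G_{i}$ is itself known and the inversion step is immediate; otherwise, identification of $F_{i}$ (up to the usual location-scale normalizations tied to $\lambda_{i}=1$) would have to be secured first, which I would flag as a standing maintained assumption rather than derive here.
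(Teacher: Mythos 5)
Your proposal is correct and follows essentially the same route as the paper's proof: fix $(\mathbf{x},\mathbf{z}_{i})$, invert $G_{i}$ in equation~\eqref{eq:ccp-inverse}, evaluate at the two points $\mathbf{z}_{j}^{1},\mathbf{z}_{j}^{2}$ to obtain a two-equation linear system in $\pi_{i}(\mathbf{x},\mathbf{z}_{i})$ and $\delta_{i}(\mathbf{x},\mathbf{z}_{i})$, and note that the hypothesis $P_{j}(\mathbf{x},\mathbf{z}_{j}^{1},\mathbf{z}_{i})\neq P_{j}(\mathbf{x},\mathbf{z}_{j}^{2},\mathbf{z}_{i})$ is exactly the rank condition making that system uniquely solvable. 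Your closing remark about the knowability and invertibility of $G_{i}$ is a sensible elaboration of what the paper disposes of by appeal to Assumption~\ref{assump:payoffshock}, but it does not change the argument.
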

\begin{proof}
    In the Appendix
\end{proof}

The proposition shows that, under the maintained assumptions and sufficient variation in $\mathbf{Z}_{j}$, which induces variation in $P_{i}(\mathbf{x}, \mathbf{z}_{i}, \mathbf{Z}_{j})$ and $P_{j}(\mathbf{x}, \mathbf{Z}_{j}, \mathbf{z}_{i})$, both the structural base payoffs and the strategic effects can be nonparametrically identified from the equilibrium choice probabilities. 
Intuitively, variation in $\mathbf{Z}_{j}$ changes both players' equilibrium choice probabilities without directly affecting the payoff functions of player $i$. 
By observing how player $i$'s choice probabilities respond to changes in $\mathbf{Z}_{j}$, the payoff primitives can be identified even in the presence of costly information acquisition.

\subsection{Identification Result: Parametric Model}
In this subsection, the model primitives are specified as parametric functions, and I show that the parameter vector is point identified.
Let the payoff function $\pi_{i}(\cdot)$ and the competitive effect $\delta_{i}(\cdot)$ be defined by parametric functions as follows:
\begin{align*}
    \pi_{i}(\mathbf{X}, \mathbf{Z}_{i}) &= \mathbf{X}\beta_{i} + \mathbf{Z}_{i}\gamma_{i}, \\
    \delta_{i}(\mathbf{X}, \mathbf{Z}_{i}) &= \delta_{i},
\end{align*}
where $\beta_{i} \in \mathbb{R}^{d_{x}}$, $\gamma_{i} \in \mathbb{R}^{d_{z}}$, and $\delta_{i} \in \mathbb{R}$ are constant parameters.
The competitive effect $\delta_{i}$ may also take a linear functional form, such as $\delta_{i}(\mathbf{X}, \mathbf{Z}_{i}) = \mathbf{X}\alpha_{i} + \mathbf{Z}_{i}\kappa_{i}$, where $\alpha_{i} \in \mathbb{R}^{d_{x}}$ and $\kappa_{i} \in \mathbb{R}^{d_{z}}$ are constant parameters.

The deterministic expected payoff \eqref{eq:deterministicpayoff} can be expressed as
\begin{equation*}
    EU_{i}(\mathbf{X}, \mathbf{Z}, \theta_{i}) = \mathbf{X}\beta_{i} + \mathbf{Z}_{i}\gamma_{i} + \delta_{i} \cdot P_{j}(\mathbf{X}, \mathbf{Z}, \theta_{j}),
\end{equation*}
where $\theta_{i} = (\beta_{i}', \gamma_{i}', \delta_{i})'$ is the parameter vector.

To identify the parameters, I adopt the strategy known as identification at infinity following \citet{Tamer2003}.
Identification requires an assumption that there exists one variable $Z_{ik}$ such that $\gamma_{ik} \neq 0$ and has an everywhere positive Lebesgue density conditional on $\mathbf{Z}_{i,-k}$.
This assumption ensures that the variation in $Z_{ik}$ can be used to identify the effect of $Z_{ik}$ on the payoff function, which is crucial for point identification of the parameter vector.
Moreover, this assumption is standard in the literature on discrete choice games \citep{Tamer2003,ciliberto2009}.
Under the assumption, the parameter vector $\theta = (\theta_{i}',\theta_{j}')'$ is point identified.

\begin{proposition}
Suppose that Assumptions 1 through 4 hold.
I further assume that there exists one variable $Z_{ik}$ such that $\gamma_{ik} \neq 0$ and has an everywhere positive Lebesgue density conditional on $\mathbf{Z}_{i,-k}$.
Then the parameter vector $\theta$ is point identified if the matrix $(\mathbf{X}, \mathbf{Z}_{i})$ has full column rank for all $i \in \mathcal{I}$.
\end{proposition}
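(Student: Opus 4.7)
The plan is to combine the semiparametric inversion already established in the preceding proposition with the classical identification-at-infinity argument of \citet{Tamer2003}. Under the parametric specification and the normalization $\lambda_i=1$, equation~\eqref{eq:ccp-inverse} gives, for each player $i$ with rival $j$,
\begin{equation*}
    G_{i}^{-1}\bigl(P_{i}(\mathbf{x},\mathbf{z}_{i},\mathbf{z}_{j})\bigr) = \mathbf{x}\beta_{i} + \mathbf{z}_{i}\gamma_{i} + \delta_{i}\, P_{j}(\mathbf{x},\mathbf{z}_{j},\mathbf{z}_{i}).
\end{equation*}
Because $F_i$ is specified and $G_i$ is strictly increasing and invertible (Assumption~\ref{assump:payoffshock}), the left-hand side is a known function of the CCPs identified under Assumption~\ref{assump:ccp}.

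The central step is to drive player $i$'s excluded variable $Z_{ik}$ to its tails. Without loss of generality take $\gamma_{ik}>0$; since $Z_{ik}$ has an everywhere-positive Lebesgue density conditional on $\mathbf{Z}_{i,-k}$, its support is all of $\mathbb{R}$. Inspecting player $i$'s equation, the linear index $\mathbf{x}\beta_i+\mathbf{z}_i\gamma_i+\delta_i P_j$ diverges to $+\infty$ (resp.\ $-\infty$) as $Z_{ik}\to+\infty$ (resp.\ $-\infty$) uniformly in $P_j\in[0,1]$, which forces $P_i\to 1$ (resp.\ $P_i\to 0$). Passing to these limits in player $j$'s equation yields
\begin{align*}
    G_{j}^{-1}\bigl(P_{j}^{+}(\mathbf{x},\mathbf{z}_{j})\bigr) &= \mathbf{x}\beta_{j} + \mathbf{z}_{j}\gamma_{j} + \delta_{j}, \\
    G_{j}^{-1}\bigl(P_{j}^{-}(\mathbf{x},\mathbf{z}_{j})\bigr) &= \mathbf{x}\beta_{j} + \mathbf{z}_{j}\gamma_{j},
\end{align*}
where $P_j^{\pm}$ denote the one-sided limits as $Z_{ik}\to\pm\infty$.

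Differencing the two displays isolates the strategic effect, $\delta_{j} = G_{j}^{-1}(P_{j}^{+}) - G_{j}^{-1}(P_{j}^{-})$, which is directly identified from data. Substituting back recovers the linear index $\mathbf{x}\beta_j + \mathbf{z}_j\gamma_j$ for every $(\mathbf{x},\mathbf{z}_j)$ in the support; the full-column-rank assumption on $(\mathbf{X},\mathbf{Z}_{j})$ then pins down $(\beta_{j},\gamma_{j})$ by standard linear algebra. A symmetric argument -- using the excluded variable of player $j$ to send $P_{j}$ to the endpoints -- identifies $\theta_{i}=(\beta_{i}',\gamma_{i}',\delta_{i})'$, and hence $\theta=(\theta_{i}',\theta_{j}')'$ is point identified.

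The main obstacle is justifying the two limit passages rigorously. One needs that when $Z_{ik}$ is sent to $\pm\infty$ the selected equilibrium CCP $P_i$ actually converges to the corresponding endpoint while $P_j$ converges along the sequence. Under the sufficient conditions of Proposition~\ref{prop:unique}, equilibrium is unique and varies continuously with covariates, so the limits are well defined; without uniqueness the argument must be strengthened by noting that any equilibrium selection must satisfy $P_i\to 1$ (resp.\ $0$) because the fixed-point map in \eqref{eq:fixedpoint} collapses to a degenerate endpoint when the expected-utility index diverges, regardless of the value of $P_j\in[0,1]$. Once this uniform tail behavior is secured, the remainder of the proof is algebraic.
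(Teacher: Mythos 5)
Your proposal is correct and follows essentially the same route as the paper: identification at infinity, driving the excluded variable $Z_{ik}$ (with full support and $\gamma_{ik}\neq 0$) to its tails so that the rival's equilibrium CCP is pinned at $0$ or $1$, then recovering the strategic effect from the difference of the two tail limits and the linear coefficients from the full-column-rank condition. The only cosmetic differences are that you invert via $G_i^{-1}$ where the paper argues injectivity directly from monotonicity of the fixed-point integrand, and your closing remark on justifying the limit passage under equilibrium multiplicity is a point the paper's proof treats only informally with its ``$\approx$'' notation.
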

\begin{proof}
    In the Appendix
\end{proof}

This result formalizes the point identification of the parametric version of the model.
While the nonparametric setting established identification of the underlying payoff primitives, this proposition demonstrates that, under rank conditions, the finite-dimensional parameter vector $\theta$ can be uniquely identified from the equilibrium choice probabilities.
Hence, the model provides a fully identified structure that can be directly estimated using standard econometric estimation methods.

\section{Monte Carlo Experiments}
\label{sec:montecarlo}
In this section, I illustrate the identification results in the Monte Carlo experiments.

\subsection{Parametric Model}
I conduct a Monte Carlo experiment to illustrate the parametric identification results of games with costly information acquisition.
I construct a sample of $M$ independent markets with two players and two actions.
The payoff function for each player is linear in parameters, and the payoff for each player from $Y_{i} = 1$ is given by 
\begin{align*}
    U_{1}(Y_{2}, \varepsilon_{1}; Z_{1}, \theta) &= 1.8 + 0.5Z_{1} -1.3 \cdot \mathbb{1}(Y_{2} = 1) + \varepsilon_{1}, \\
    U_{2}(Y_{1}, \varepsilon_{2}; Z_{2}, \theta) &= 1.6 + 0.8Z_{2} -1.3 \cdot \mathbb{1}(Y_{1} = 1) + \varepsilon_{2},
\end{align*}
where $\theta$ is the vector of parameters.
The payoff from $Y_{i} = 0$ is zero for both players.
The true parameters in the data generating process $\theta^{0} = (\beta_{1}^{0}, \gamma_{1}^{0}, \beta_{2}^{0}, \gamma_{2}^{0},\delta_{1}^{0}, \delta_{2}^{0})$ are given by $(1.8, 0.5, 1.6, 0.8, -1.3, -1.3)$.
I assume no market-specific variables in this exercise, and the player-specific variable $Z_{i}$ is randomly drawn from the uniform distribution with support $[0,1]$.
Each player's prior belief $F_{i}(\varepsilon_{i})$ follows the normal distribution with mean zero and variance $4$.

The sample data are generated from the unique equilibrium in each market conditional on $(Z_{1},Z_{2})$, and I estimate the model using the maximum likelihood (ML) with the constrained optimization approach \citep{su2014}.
One challenge in estimating discrete games with incomplete information is the potential existence of multiple equilibria, which makes computing all the equilibria often infeasible.
As a result, the ML estimation is often demanding and computationally burdensome.
In this exercise, however, the equilibrium is unique by construction, and thus the ML estimation is feasible.
Among the alternative estimation methods, such as two-step estimators and pseudo-likelihood estimators, the constrained ML approach is known to be efficient \citep{su2014}.

Let $\mathbf{P}_{m}=(P_{1m}, P_{2m})'$ be a vector of choice probabilities in market $m$.
Given the parameter vector $\theta$ and the data $\mathbf{z}_{m} = (z_{1m}, z_{2m})'$ and $\mathbf{y}_{m} = (y_{1m}, y_{2m})'$, the log likelihood function in market $m$ is 
\begin{equation*}
    \ell_{m}(\mathbf{y}_{m}, \mathbf{z}_{m}, \mathbf{P}_{m}, \theta) = \sum_{i=1,2} y_{im} \log \left( P_{im} \right) + (1 - y_{im}) \log \left( 1 - P_{im} \right).
\end{equation*}
The log likelihood for all markets $m=1,2,\ldots,M$ is 
\begin{equation}\label{eq:likelihood}
    \mathcal{L}(\mathbf{P}, \theta) = \sum_{m=1}^{M} \ell_{m}(\mathbf{y}_{m}, \mathbf{z}_{m}, \mathbf{P}_{m}, \theta),
\end{equation}
where $\mathbf{P} = (\mathbf{P}_{m})_{m=1}^{M}$ is the vector of choice probabilities for each market.

The probability $\mathbf{P}$ in the equation \eqref{eq:likelihood} is not necessarily an equilibrium choice probability. 
I impose the Nash equilibrium restriction \eqref{eq:fixedpoint} as a constraint.
Hence, the ML estimation is to solve the maximization problem:
\begin{equation}
    \max_{\mathbf{P}, \theta} \mathcal{L}(\mathbf{P}, \theta) \quad \text{s.t.} \quad \mathbf{P} - \Psi(\mathbf{P}, \mathbf{z}, \theta)= 0.
\end{equation}

I report the estimation results in Table \ref{tab-sim-param}.
The table shows the result for samples of $M=1,000$ markets and $R=300$ repetitions.
The result displays strong evidence of convergence of each estimates to the true parameters.

\begin{table}[ht]
    \centering
    \begin{tabular}{cccccccc}
        \hline\hline 
        Parameter & True Value & Mean & Median & S.D. & RMSE & Mean Bias & Median Bias \\ \hline
        $\beta_{1}$ & 1.8 & 1.7803 & 1.7432 & 0.4633 & 0.4638 & -0.0197 & -0.0568 \\ 
        $\gamma_{1}$ & 0.5 & 0.5050 & 0.5050 & 0.1558 & 0.1559 & 0.0050 & 0.0050 \\ 
        $\beta_{2}$ & 1.6 & 1.6172 & 1.6127 & 0.3679 & 0.3683 & 0.0172 & 0.0127 \\ \
        $\gamma_{2}$ & 0.8 & 0.8060 & 0.8092 & 0.1535 & 0.1537 & 0.0060 & 0.0092 \\ 
        $\delta_{1}$ & -1.3 & -1.2754 & -1.2194 & 0.5847 & 0.5853 & 0.0246 & 0.0806 \\ 
        $\delta_{2}$ & -1.3 & -1.3241 & -1.3207 & 0.4469 & 0.4476 & -0.0241 & -0.0207 \\ \hline
    \end{tabular}
    \caption{Monte Carlo Results: Parametric Model, $M=1000$}
    \label{tab-sim-param}
\end{table}

\subsection{Semiparametric Model}
Next, I illustrate the semiparametric identification results of games with costly information acquisition.
In this Monte Carlo experiment, the maximum likelihood estimation by the method of sieve \citep{chen2007} in implemented to estimate nonparametric functions of the base return and the competitive effect.

The data generating process mainly follows \citet{xie2024}.
I construct a sample of $M$ independent markets.
I consider two identical players in each market.
The payoff function for each player from $Y_{i} = 1$ is given by 
\begin{align*}
    \pi_{i}(Z_{i}) &= 3 - \log(1 + 2Z_{i}), \\
    \delta_{i}(Z_{i}) &= -\frac{4 \exp(Z_{i})}{1 + \exp(Z_{i})},
\end{align*}
where $Z_{i}$ is randomly drawn from the uniform distribution with the support $[0,3]$.
The payoff from $Y_{i} = 0$ is zero for both players.
Each player's prior belief $F_{i}(\varepsilon_{i})$ follows from the uniform distribution with the support $[-5,5]$.

Let $\theta_{i} = (\pi_{i}, \delta_{i})$ be model primitives.
$\theta_{i}$ contains all unknown functions which is to be estimated by the sieve MLE.
Moreover, each element of the model primitives $\theta_{i}$ satisfies $\pi_{i} \in \Pi$ and $\delta_{i} \in \Delta$, where $\Pi$ and $\Delta$ are the spaces of functions $\pi_{i}$ and $\delta_{i}$.
The approximating spaces, i.e. sieves, are represented by $\Pi_{M} \subseteq \Pi$ and $\Delta_{M} \subseteq \Delta$, and the sieve space for the model primitive $\theta_{i}$ is $\Theta_{M} = \Pi_{M} \times \Delta_{M}$.

I use the space of polynomials as the sieve space for both $\pi_{i}$ and $\delta_{i}$.
The sieve spaces are 
\begin{equation*}
    \Pi_{M} = \left\{ \sum_{l=0}^{d_{\pi, M}} \beta_{l} \cdot Z_{i}^{l} \,\middle|\, \beta_{l} \in \mathbb{R} \right\}, \quad \Delta_{M} = \left\{ \sum_{l=0}^{d_{\delta, M}} \beta_{l} \cdot Z_{i}^{l} \,\middle|\, \beta_{l} \in \mathbb{R}\right\}.
\end{equation*}

The equilibrium choice probabilities follow from \eqref{eq:ccp-inverse}
\begin{equation}
    P_{i}(Z_{i}, Z_{j}; \theta_{i}) = G_{i}\left( \pi_{i}(Z_{i}) + \delta_{i}(Z_{i}) P_{j}(Z_{j}, Z_{i};\theta_{j}) \right),
\end{equation}
where $G_{i}(\cdot)$ is the inverse of the uniform distribution with the support $[-5,5]$.

Under the regularity conditions \citep{chen2007}, the sieve MLE is consistent, and the model primitives $\theta_{i}$ can be estimated by maximizing the following log-likelihood function
\begin{equation}
    \max_{\theta_{i} \in \Theta_{M}} \sum_{i=1}^{2}\sum_{m=1}^{M} y_{im} \log \left( P_{im}(z_{i}, z_{3-i,m}, \theta_{i}) \right) + (1 - y_{im}) \log \left( 1 - P_{im}(z_{im}, z_{3-i,m}, \theta_{i}) \right) 
\end{equation}

Figure \ref{fig:simulation-siemle} shows the estimation results.
I use the sample of $M=400$ independent markets, and repeated $R=100$ times.
The degree of polynomial was set to 2.
The blue solid lines in each panel indicate the true value of $\pi_{i}$ and $\delta_{i}$.
The red solid lines in each panel represent the averages of the estimates from 100 repetition.
The red dashed lines show 90\% confidence intervals.
The sieve MLE give estimates that are consistent and close to the true model primitives.

\begin{figure}
        \centering
        \begin{subfigure}[t]{0.485\textwidth}
            \centering
            \includegraphics[width=\textwidth]{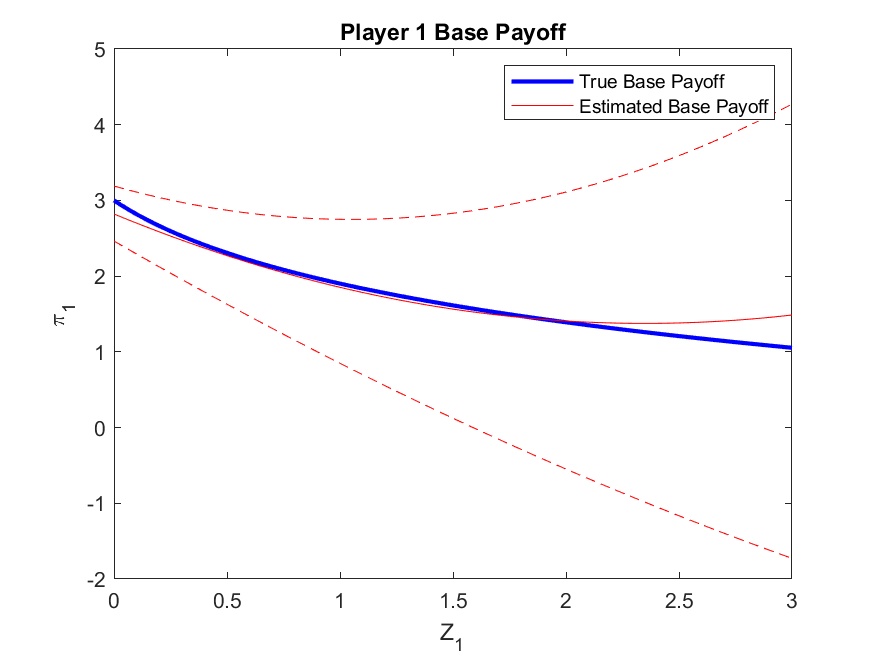}
            \caption{{\small Player 1's base return}}    
        \end{subfigure}
        \hfill
        \begin{subfigure}[t]{0.485\textwidth}
            \centering
            \includegraphics[width=\textwidth]{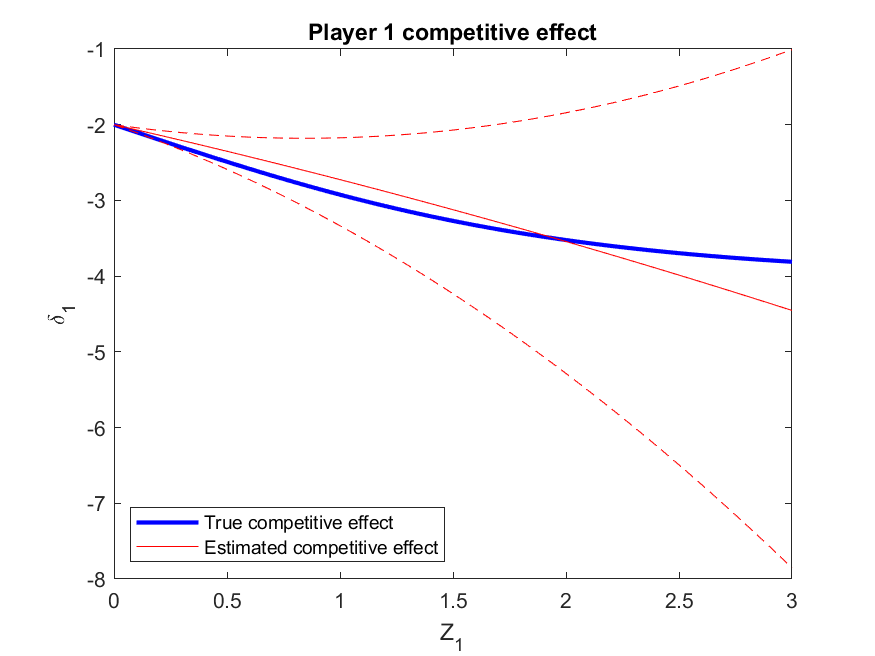}
            \caption{{\small Player 1's competitive effect}}
        \end{subfigure}
        \vskip\baselineskip
        \begin{subfigure}[t]{0.485\textwidth}
            \centering
            \includegraphics[width=\textwidth]{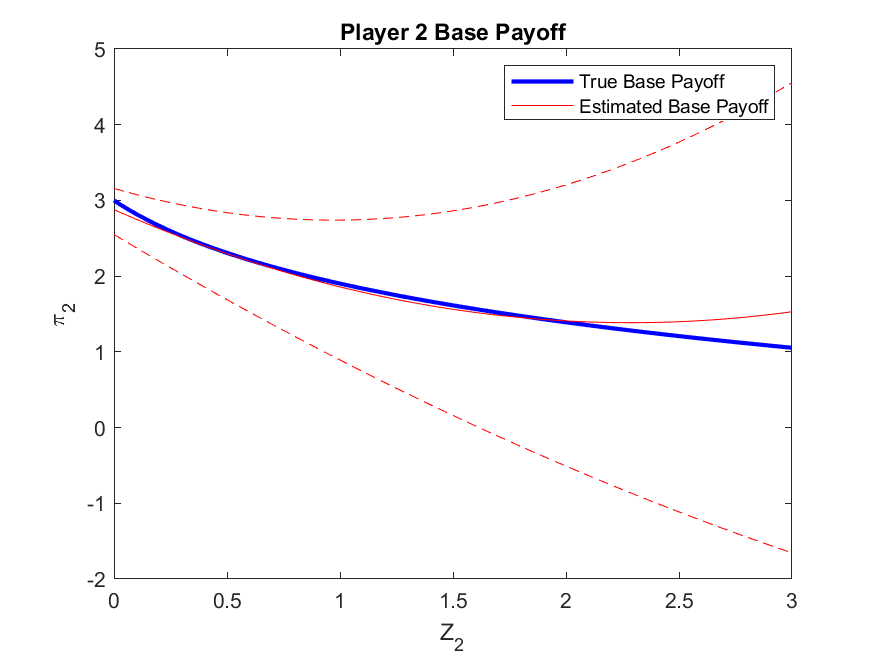}
            \caption{{\small Player 2's base return}}    
        \end{subfigure}
        \hfill
        \begin{subfigure}[t]{0.485\textwidth}
            \centering
            \includegraphics[width=\textwidth]{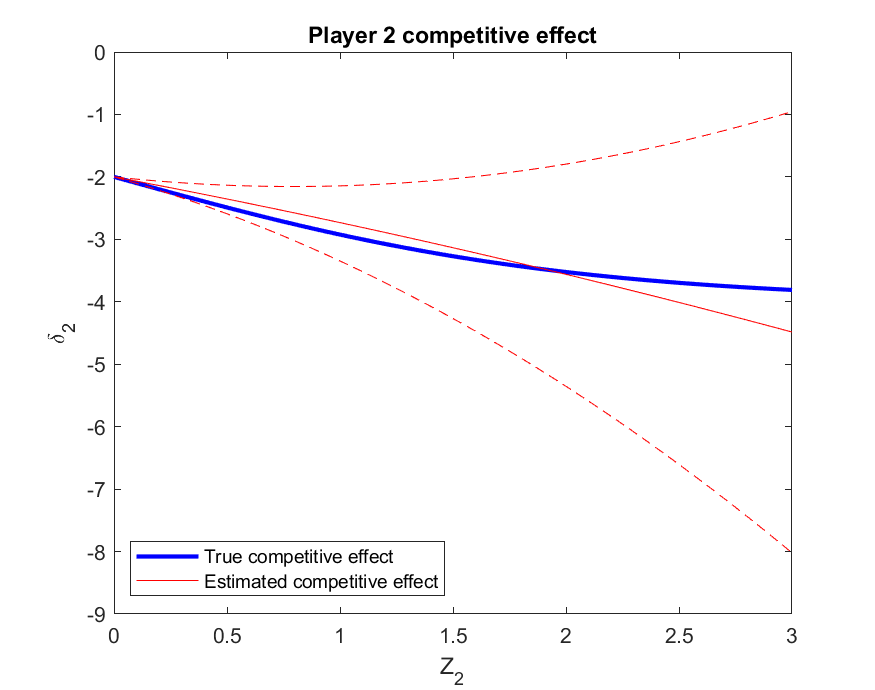}
            \caption{{\small Player 2's competitive effect}}
        \end{subfigure}
        \caption{Monte Carlo Results: Semiparametric Model, $M=400$.}
        \label{fig:simulation-siemle}
\end{figure}

\section{Empirical Application}
\label{sec:application}
In this section, I apply my model to the market-level entry decisions of airline firms in the U.S. airline industry.

\subsection{Motivation}
I investigate the U.S. airline industry as an empirical application.
In this industry, a market is defined as a city pair.
Each airline's entry decision involves deciding whether or not to serve a route between two cities.

In contrast to much of the existing literature, my model assumes that firms do not perfectly know their profits at the time they make their entry decisions due to payoff shocks.
In previous models, firms are assumed to observe their payoff shocks, and thus their profits when making decisions.
However, in my model, the uncertainty arising from unobserved payoff shocks induces firms to acquire information.

The uncertainty in profits, and thus the need for information acquisition, stems from three sources.
First, airlines do not know the exact demand for a route before selling flight tickets and therefore need to gather information to forecast it.
If a firm serves a route that attracts few passengers, costs can exceed revenue, leading to negative profits.
In such a case, the firm would be better off not entering the market.
Thus, gathering information about future demand plays a key role in their decision making.

Second, the operational feasibility at airports is important for airlines' entry decisions.
This includes factors such as airport infrastructures, slot availability, and ground handling capacities at a certain airport.
Assessing the operational feasibility might require airlines to acquire information from airport authorities.

Lastly, regulatory environments are a crucial factor in airlines' decisions.
To provide services between two cities, an airline needs to comply with the local aviation regulations and acquire information about them.
Moreover, taxes and fees can differ across airports, and the firms should be aware of these prior to the operation to make an optimal decision.

\subsection{Data}
My primary dataset on the airline industry comes from \citet{kline2016} \footnote{\citet{chen2018} and \citet{gu2025} use the same dataset and apply their models to airline entry game. In addition, \citet{kline2024} uses a similar dataset, except that the data are collected from the year 2023.}.
According to them, the data are collected from the second quarter of the 2010 Airline Origin and Destination Survey (DB1B).
The unit of observation is a market, and each observation contains information on airlines' market entry decisions and the market-specific and firm-specific explanatory variables.

The data contain $7,882$ markets indexed by $m=1,2,\ldots,7882$, and each market is defined as routes between two airports irrespective of stopping.
As in \citet{kline2016}, I consider the entry behavior of two players: (1) low-cost carriers ($LCC$) \footnote{According to \citet{kline2016}, AirTran, Allegiant Air, Frontier, JetBlue, Midwest Air, Southwest, Spirit, Sun Country, USA3000, and Virgin America are aggregated into one single player, $LCC$.}, and (2) other airlines ($OA$) \footnote{All airlines other than low-cost carriers are aggregated into one single player, $OA$.}.
In the following analysis, the players $LCC$ and $OA$ are treated as a single player, and $Y_{LCC}=1$ (respectively, $Y_{OA}=1)$ means that at least one airline firm in $LCC$ (respectively, $OA$) serves the market.
On the contrary, $Y_{LCC}=0$ (respectively, $Y_{OA}=0$) means that none of firms in $LCC$ (respectively, $OA$) serves the market.
The unconditional choice probabilities are $(\widehat{\Pr}(0,0), \widehat{\Pr}(1,0), \widehat{\Pr}(0,1), \widehat{\Pr}(1,1)) = (0.15, 0.07, 0.61, 0.16)$, where $\widehat{\Pr}(Y_{LCC},Y_{OA})$ is the sample frequency of the market outcome $(Y_{LCC},Y_{OA})$.

Two explanatory variables are used in my empirical analysis.
The first explanatory variable is a market- and firm-specific variable which has been used in many papers: market presence \citep{berry1992, ciliberto2009}.
The market presence, denoted by $Z_{im}$, is a proxy for a scale of operation of airline $i$ in market $m$ \citep{berry1992}.
Moreover, the market presence is a measure of the airline's market power and profitability since a large scale of operation may deter entry of other airlines \citep{gu2025}.
The market presence $Z_{im}$ varies across airlines and only enters airline $i$'s profits.
This variable satisfies the exclusion restriction and is essential for the payoff identification.
The market presence of airline $i$ at a given market $m$ is computed as follows.
For each airline and for each airport, the number of routes served by the airline from the airport is counted.
Next, the proportion of routes served from the airport is calculated by dividing the number of routes served by the airline by the total number of routes from that airport by all airlines.
The market presence for each airline at a given market $m$ is the average of these proportions at two endpoints of the market.
Since there are two players in my analysis, the market presence variable is the maximum of market presence for each airline in the group of $LCC$ and $OA$.

The second explanatory variable is market-specific variable: market size.
The market size, denoted by $X_m$, is defined by the population of the endpoints \footnote{The original data source, \citet{kline2016}, does not explicitly specify which of the two endpoints' population was used, nor does it clarify the source of the population data. However, \citet{gu2025}, which uses the same dataset, mentions that the population data is collected from the U.S. Census Bureau in 2010 and the market size is defined by the maximum of the two endpoints' population.}.
Both LCC and OA in the same market have the same market size, and the market size varies across the markets.
In the analysis, the unit of market size is billions of people.

Table \ref{tab:sumstat} reports the summary statistics of my data.
The decision variables are binary, and both the market presence and market size are continuous variables.
Apparently, the entry probability of OA is about 3 times higher than LCC.
Moreover, the airport presence of OA, a measure of operation scale, is about 7.9 times higher than LCC.
On average, markets that LCC serves tend to be larger than the markets that OA serves.

\begin{table}[ht]
  \centering
  \caption{Summary Statistics}
  \label{tab:sumstat}
  \begin{tabular}{lcccccc}
  \toprule
                                & Mean   & Std. Dev. & Median & Max & Min & Obs. \\
  \midrule
  $LCC$ Entry                   & 0.2322 & 0.4222    & 0      & 1   & 0   & 7,882 \\
  $OA$  Entry                   & 0.7742 & 0.4182    & 1      & 1   & 0   & 7,882 \\
  \midrule
  Market Presence               & 0.1986 & 0.1596 & 0.1000 & 0.5007 & 0 & 15,764 \\
  \addlinespace
  \quad Market Presence ($LCC$)       & 0.0447 & 0.0204 & 0.0429 & 0.1128 & 0 & 7,882 \\
  \quad Market Presence ($Y_{LLC}=1$) & 0.0676 & 0.0154 & 0.0677 & 0.1128 & 0.0149 & 1,830 \\ 
  \quad Market Presence ($Y_{LLC}=0$) & 0.0377 & 0.0161 & 0.0386 & 0.1053 & 0 & 6,052 \\
  \addlinespace
  \quad Market Presence ($OA$)        & 0.3525 & 0.0562 & 0.3581 & 0.5007 & 0.0349 & 7,882 \\ 
  \quad Market Presence ($Y_{OA}=1$)  & 0.3646 & 0.0437 & 0.3652 & 0.5007 & 0.1318 & 6,102 \\  
  \quad Market Presence ($Y_{OA}=0$)  & 0.3111 & 0.0723 & 0.3279 & 0.4655 & 0.0349 & 1,780 \\
  \midrule
  Market Size                   & 1,229,550 & 648,935 & 1,067,422 & 4,177,793 & 171,466 & 7,882 \\
  \addlinespace
  \quad Market Size ($Y_{LLC}=1$)     & 1,310,375 & 662,521 & 1,159,922 & 4,044,754 & 231,001 & 1,830 \\ 
  \quad Market Size ($Y_{LLC}=0$)     & 1,205,111 & 642,828 & 1,046,574 & 4,177,793 & 171,466 & 6,052 \\
  \addlinespace
  \quad Market Size ($Y_{OA}=1$)      & 1,286,312 & 646,809 & 1,131,214 & 4,177,793 & 173,002 & 6,102 \\ 
  \quad Market Size ($Y_{OA}=0$)      & 1,034,966 & 617,953 & 874,540   & 4,177,793 & 171,466 & 1,780 \\
  \bottomrule
\end{tabular}
\end{table}

\subsection{Preliminary Analysis}
Before estimating the structural model, I conduct a preliminary analysis using simple Probit regressions to examine the data patterns.
The Probit regression equation is given by the following equation:
  \begin{equation}
  Y_{im} = \mathds{1} \left\{ \beta^{cons} + \beta^{pres} Z_{im} + \beta^{size} X_{m} + \delta Y_{jm}+ \varepsilon_{im} \geq 0 \right\},
\end{equation}
where the error term $\varepsilon_{im}$ follows the standard normal distribution.
In simple Probit regressions, each observation represents a specific player within a particular market.
The dependent variable is entry decisions, $Y_{im} \in\{0,1\}$.
I regress the binary decisions on firm-specific and market-specific variables.

\begin{table}[ht]
  \centering
  \caption{Estimation results from Probit regressions}
  \label{tab:prelim}
\begin{tabular}{lcccc}
\toprule
                   &   (1)    &   (2)    &   (3)    \\
\midrule
Market Presence    & 5.6688   & 5.7304   & 5.3937   \\
                   &(0.0757)  &(0.0768)  &(0.0899)  \\
\addlinespace 
Market Size        &          & 0.2396   & 0.2582   \\
                   &          &(0.0182)  &(0.0185)  \\
\addlinespace 
Competitive Effect &          &          & -0.1936  \\
                   &          &          &(0.0277)  \\
\addlinespace 
Constant           & -1.1020  & -1.4030  & -1.2632  \\
                   &(0.0186)  &(0.0300)  &(0.0358)  \\
\midrule
Log Likelihood     & -7648.11 & -7560.80 & -7536.69 \\
Observations       & 15,764   & 15,764   & 15,764   \\
\bottomrule
\end{tabular}
\smallskip
\begin{tablenotes}
\footnotesize
\item Notes: This table reports the estimation results of the Probit regressions.
  Each observation represents a specific player within a particular market.
  Standard errors are in the parentheses.
\end{tablenotes}
\end{table}

Table \ref{tab:prelim} reports the estimation results from the Probit regressions.
The coefficients indicate the impact of each variable on the underlying latent profit for the entry decision.
The estimation results are reasonable in two ways.
First, both market presence and market size are positively correlated with the entry decisions.
This result is consistent with the intuition that an airline with a large operational scale in a large demand market is more likely to serve the route.
Second, the competitive effect is negatively correlated with the entry decisions, indicating that a rival's entry reduces the player's profits.

Although the preliminary analysis result is consistent with the intuition on airlines' entry decisions, the estimation results may be biased.
In the above analysis, the airlines' profits are assumed to be perfectly observable to airlines.
As I have discussed in the previous section, the profits may not be perfectly known to airlines due to uncertainty, operational feasibility, and regulatory environments.
I model the structural entry game with costly information acquisition to properly incorporate strategic interactions in the next subsection.

\subsection{Structural Entry Game Analysis}
I develop a standard parametric entry game that builds on the existing literature by adding costly information acquisition.
The two firms are $LCC$ and $OA$, and they choose a binary action $Y_{im} \in \left\{ 0,1 \right\}$ where $Y_{im}=1$ if firm $i$ enters a market $m$ and $Y_{im}=0$ if $i$ stays out.
The profit function for firm $i$ is specified as follows:
\begin{equation}
  \label{eq:sec5-structuralpayoff}
  U_{im}(Y_{im},Y_{jm}, \varepsilon_{im}; X_{m}, Z_{im}, \theta) = Y_{im} \left(\beta_{i}^{cons} + \beta_{i}^{pres} Z_{im}+ \beta_{i}^{size} X_{m} + \delta_{i} Y_{jm} + \varepsilon_{im} \right), 
\end{equation}
where $\theta$ is the vector of parameters of interest.

Firm $i$'s profit from entering the market is $\beta_{i}^{cons} + \beta_{i}^{pres} Z_{im}+ \beta_{im}^{size} X_{m} + \delta_{i} Y_{jm} + \varepsilon_{im}$ where $Z_{im}$ represents the market presence, $X_m$ represents the market size, $Y_{jm}$ represents rival's entry, and $\varepsilon_{im}$ is the payoff shock.
The payoff shock $\varepsilon_{im}$ is unobserved by firm $i$ (and also by rival $j$ and econometricians) when the firm $i$ makes entry decision.
The profit from staying out is normalized to zero.
I assume that each player's prior belief about $\varepsilon_{im}$ follows an i.i.d. standard normal distribution.
This is because the normal distribution describes the maximum entropy among all distributions with the same variance.
Given this prior belief, firms make entry decisions under the condition of maximum uncertainty.

I estimate the parameters by using the two-step estimation method due to multiple equilibria.
Under the multiple equilibria, a vector of parameters $\theta$ corresponds to multiple choice probabilities.
Since I do not know which choice probability is selected by the firms in the data, I estimate the choice probabilities in the first step.
In the second step, given the choice probabilities, I estimate the parameters by maximizing the pseudo log-likelihood function.

In the first step, I nonparametrically estimate the conditional choice probabilities.
I denote the conditional choice probability of entering the market as $P_{im}(x_m, \mathbf{z}_m) = P_{im}(Y_{im}=1 | x_m, \mathbf{z}_{m})$.
Since the entry decision is binary, the conditional choice probability of staying out can be expressed as $1 - P_{im}(x_{m}, \mathbf{z}_{m})$.
The conditional choice probabilities are estimated using sieve method \citep{chen2007}.
Let $q^{\kappa}(x_m, \mathbf{z}_m) = (q_{1}(x_m, \mathbf{z}_m), \ldots, q_{\kappa}(x_m, \mathbf{z}_m))^{\top}$ be the $\kappa\times1$ vector of basis functions with degree $\kappa$.
The sieve estimator of the conditional choice probabilities are:
\begin{equation}
  \label{eq:sec5-sieve}
  \widehat{P}_{im}(x_m, \mathbf{z}_m) = \frac{\exp(q^{\kappa}(x_m, \mathbf{z}_m)' \hat{\gamma})}{1 + \exp(q^{\kappa}(x_m, \mathbf{z}_m)' \hat{\gamma})},
\end{equation}
where $\hat{\gamma}$ is the maximizer of the log likelihood function
\begin{equation}
  \label{eq:sec5-sievell}
  \hat{\gamma} = \arg \max_{\gamma} \sum_{m=1}^M \left[ \mathds{1} \left\{Y_{im}=1 \right\} q^{\kappa}(x_m, \mathbf{z}_{m})' \gamma - \ln \left(1+\exp(q^{\kappa}(x_m, \mathbf{z}_{m})' \gamma) \right) \right].
\end{equation}
In the first step estimation, I set the degree $\kappa=3$.
The sieve estimator $\widehat{P}_{im}(x_m, \mathbf{z}_m)$ will converge to the true conditional choice probability $P_{im}(x_m, \mathbf{z}_m)$.

In the second step estimation, I take as given the first step estimates $\widehat{P}_{im}(x_m, \mathbf{z}_m)$ of the conditional choice probabilities.
I then construct the pseudo log-likelihood function and find the maximizer of the pseudo log-likelihood function.
For a given value of $(\mathbf{x}, \mathbf{z})$, for a given $y = (y_{LCC}, y_{OA})$, I estimate the parameters $\theta$ by maximizing the following pseudo log-likelihood function:
\begin{equation}
  \label{eq:sec5-pmle-obj}
  \ell(\theta, \widehat{\mathbf{P}}) =  \sum_{m=1}^M \sum_{i} y_{im} \ln \Psi_{i} \left( \widehat{\mathbf{P}}_{m}(x_m, \mathbf{z}_m), x_m, \mathbf{z}_m, \theta  \right) + (1-y_{im}) \ln \left( 1 - \Psi_{i} \left( \widehat{\mathbf{P}}_{m}(x_m, \mathbf{z}_m), x_m, \mathbf{z}_m, \theta  \right) \right),
\end{equation}
where $\Psi_i(\cdot)$ is the equilibrium condition defined by
\begin{align}
  \label{eq:sec5-pmle-psi}
  &\Psi_{i} \left( \widehat{\mathbf{P}}_{m}(x_m, \mathbf{z}_m), x_m, \mathbf{z}_m, \theta  \right) \nonumber \\
  &\quad = \int_{\varepsilon_i} \frac{\widehat{P}_{im}(x_m, \mathbf{z}_{m}) \exp\left(\beta_i^{cons} + \beta_i^{pres}z_{im} + \beta_{i}^{size}x_m + \delta_i\widehat{P}_{j}(x_m, \mathbf{z}_{m}) + \varepsilon_{im} \right)}{\widehat{P}_{im}(x_m, \mathbf{z}_{m}) \exp\left(\beta_i^{cons} + \beta_i^{pres}z_{im} + \beta_{i}^{size}x_m + \delta_i\widehat{P}_{jm}(x_m, \mathbf{z}_{m}) + \varepsilon_{im} \right) + 1 - \widehat{P}_{im}(x_m, \mathbf{z}_m) } \mathrm{d}\Phi(\varepsilon_{im}). \nonumber
\end{align}

\subsection{Estimation Result}
Table \ref{tab:estimgame} presents the estimation results under the assumption that players acquire costly information.
For both $LCC$ and $OA$, the coefficients on market presence and market size are positive and statistically significant.
This means that markets with existing airline presence and larger market sizes generally offer higher expected profits for entry.
The constant terms for both $LCC$ and $OA$, which capture the average profit from entry independent of the observed firm and market characteristics, are negative and statistically significant.
The negative constant term can be interpreted as the entry barrier faced by firms, and the results indicate that this barrier is more substantial for $LCC$ than $OA$.

The competitive effects are negative for both $LCC$ and $OA$, consistent with the expectation that increased competition reduces a firm's entry profit.
However, the coefficient for $LCC$ is not statistically significant.
This suggests that, conditional on other variables, the entry of $OA$ does not influence $LCC$'s entry decision.
By contrast, $LCC$'s entry decision has a statistically significant negative effect on $OA$'s entry decision.
In the context of costly information acquisition, this implies that $LCC$ relies more heavily on its own payoff uncertainty, making the acquisition of precise information more critical for its decision.

\begin{table}[ht]
\centering
\caption{Estimation results of the entry game with costly information acquisition}
\label{tab:estimgame}
\begin{tabular}{lcc}
\toprule
                  & $LCC$    & $OA$     \\
\midrule
Market Presence   & 16.4100  & 2.5118  \\
                  & (0.5552) & (0.1185) \\
\addlinespace 
Market Size       & 0.0407   & 0.0853   \\
                  & (0.0124) & (0.0104) \\
\addlinespace 
Competitive Effect& -0.0301  & -0.1963  \\
                  & (0.0651) & (0.0517) \\
\addlinespace 
Constant          & -1.0847  & -0.7013  \\
                  & (0.0653) & (0.0469) \\
\midrule
Log Likelihood    & -5839.21 &          \\
Observations      & 7882     &          \\
\bottomrule
\end{tabular}
  \begin{tablenotes}
    \footnotesize
    \item Notes: This table reports the estimation results of the model with costly information acquisition.
    The standard errors are computed using the bootstrap method with 1,000 replications.
    The standard errors are in parentheses.
  \end{tablenotes}
\end{table}

\subsection{Information Acquisition}
I can compute the amount of information acquired by each firm in each market using the estimated parameters.
The amount of information acquired by firm $i$ in market $m$ is given by
\begin{equation}
  I_{im}(F_{im}, P_{im}(\mathbf{x}_{m}, \mathbf{z}_{m})) = H(F_{im}(\cdot)) - \mathrm{E} \left[ H(P_{im}(\cdot| y_{im}; \mathbf{x}_{m}, \mathbf{z}_{m})) \right],
\end{equation}
where $P_{im}(\mathbf{x}_{m}, \mathbf{z}_{m})$ is the equilibrium choice probability and $P_{im}(\cdot|, y_{im};\mathbf{x}_{m}, \mathbf{z}_{m})$ is the posterior beliefs.

Table \ref{tab:infoacqui} reports the mean, median, and maximum of the amount of information acquired by $LCC$ and $OA$.
The mean of the amount of information acquired by $LCC$ is from $0.0363$, while that by $OA$ is $0.0543$.
The $LCC$ tends to acquire less information than $OA$.
The same pattern is observed in the median and maximum of the amount of information acquired.
The median and the maximum of the amount of information acquired by $LCC$ are $0.0250$ and $0.1198$, respectively, which are lower than those of $OA$.
These results indicate that $LCC$ tends to acquire less information than $OA$, which may be due to their business model that focuses on cost efficiency.

\begin{table}[h]
    \centering
    \caption{The Amount of Information Acquired}
    \label{tab:infoacqui}
    \begin{tabular}{lcc}
    \toprule
            & LCC    & OA     \\
    \midrule
    Mean    & 0.0363 & 0.0543 \\
    \addlinespace 
    Median  & 0.0250 & 0.0546 \\
    \addlinespace 
    Max     & 0.1198 & 0.1330 \\
    \bottomrule
    \end{tabular}
\end{table}

Figure \ref{fig:infoacqui} presents the distribution of the amount of information acquired by $LCC$ and $OA$, along with fitted normal density functions.
The distribution of the amount of information acquired by $OA$ appears to have a higher mean and a smaller variance than that of $LCC$.
This indicates that $OA$ tends to acquire more information than $LCC$ and suggests systematic differences in the role of information across the two firms.

\begin{figure}[h!]
    \centering
    \includegraphics[width=0.8\textwidth]{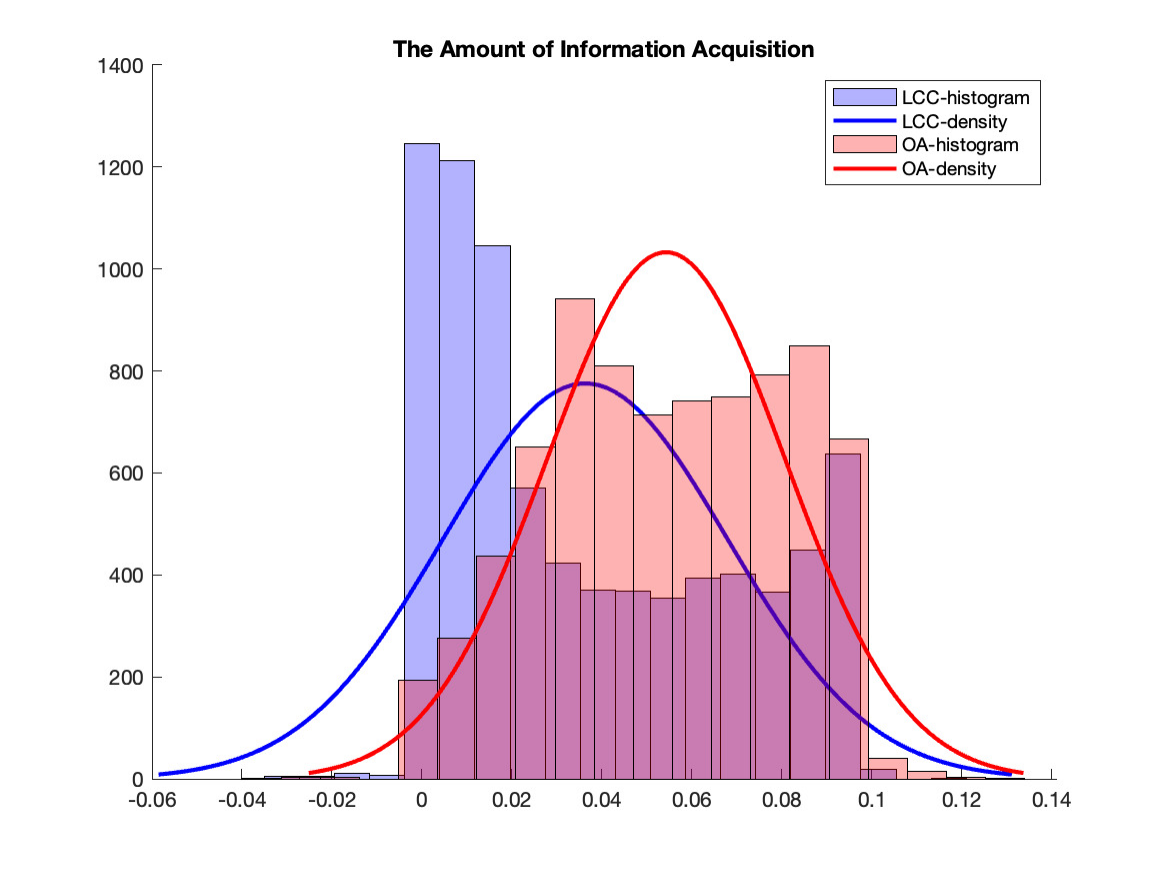}
    \caption{The amount of information acquired}
    \label{fig:infoacqui}
    \begin{tablenotes}
    \footnotesize
    \item Notes: This figure shows the histogram of the estimated amount of information acquired by $LCC$ and $OA$. 
    The solid lines represent the fitted normal density functions.
    The negative values of the amount of information acquired are due to the numerical approximation errors.
  \end{tablenotes}
\end{figure}

\subsubsection{Comparison to Perfectly Private Information Structure}
I compare the model with costly information acquisition to the model without such costs.
When the cost of information is zero, each player acquires full information about the exact value of their own payoff shock $\varepsilon_{i}$.
This corresponds to the perfectly private information structure studied in \citet{seim2006, bajari2010, aradillas2010}.
The key distinction between the two models lies in whether firms observe their own payoff shock when making entry decisions.
In my model with costly information acquisition, firms do not observe payoff shocks prior to entry decisions.
In contrast, under the perfectly private information structure, firms perfectly observe their own payoff shock.
The estimation results for the perfectly private information structure model are reported in Table \ref{tab:estimgame-perfpriv}.

In the model without information acquisition, the estimated coefficients on market presence and market size are substantially larger for both $LCC$ and $OA$ compared to the model with costly information acquisition.
This suggests that when firms observe their private payoff shocks, the observed market characteristics become more influential in explaining their entry decisions.
The constant terms also become significantly more negative in the perfectly private information model, implying a larger barrier of entry.

The estimated competitive effects differ across the two models.
In the model with costly information acquisition, the absolute value of estimates of competitive effect for $LCC$ is smaller than that for $LCC$ in perfectly private information model.
However, the competitive effect remains insignificant for $LCC$.
The estimates for $OA$ show similar results as in the perfectly private information model.

All the coefficients, except for the $LCC$'s competitive effect, increase in the magnitude of coefficients from the perfectly private information model, as compared to the costly information acquisition model.
This suggests that when firms are modeled with perfectly private information structure, the estimation may attribute a greater influence to observed firm and market characteristics.
When firms are assumed to perfectly observe their own payoff shocks, conditional choice probabilities are more affected by observable variables, such as market presence and market size.
As a result, the model may overestimate the influence of these variables.

Finally, the lower log likelihood for the perfectly private information model indicates that the model with costly information acquisition provides a better statistical fit to the observed data.
This finding supports the idea that firms' actual decision making likely incorporates a more sophisticated, endogenous information strategy.

\begin{table}[ht]
\centering
\caption{Estimation results of the entry game without information acquisition}
\label{tab:estimgame-perfpriv}
\begin{tabular}{lcc}
\toprule
                  & $LCC$    & $OA$     \\
\midrule
Market Presence   & 62.9220  & 9.6254   \\
                  & (1.4240) & (0.3466) \\
\addlinespace 
Market Size       & 0.1141   & 0.3315   \\
                  & (0.0331) & (0.0302) \\
\addlinespace 
Competitive Effect& -0.0091  & -0.3254  \\
                  & (0.1057) & (0.0610) \\
\addlinespace 
Constant          & -4.1228  & -2.8645  \\
                  & (0.1245) & (0.1242) \\
\midrule
Log Likelihood    & -5986.84 &          \\
Observations      & 7882     &          \\
\bottomrule
\end{tabular}
\begin{tablenotes}
  \footnotesize
  \item Notes: This table reports the estimation results of the perfectly private information model.
    In this information structure, each firm only observe their own payoff shock $\varepsilon_{i}$, but not rival's payoff shock $\varepsilon_{j}$.
    The standard errors are in the parentheses.
\end{tablenotes}
\end{table}

\section{Conclusion}
\label{sec:conc}
This paper has proposed a novel econometric framework for static discrete games with costly information acquisition.
In contrast to the existing literature, in the model, players acquire information about their own payoffs, but acquiring the information incurs costs, which is proportional to the mutual information.
I characterized the Nash equilibrium conditions and established equilibrium existence and uniqueness condition.
I showed that the model primitives are identified for  for both parametric and semiparametric settings, and I presented the identification results via Monte Carlo experiments.
Finally, I applied the proposed framework to U.S airline market entry decisions and found that the low-cost carriers acquire less information about profits compared to other airlines, which may be due to their business model that focuses on cost efficiency.


\newpage
\bibliographystyle{chicago}
\bibliography{bibliography}

\newpage
\begin{appendix}
\section{Proofs}
\subsection{Proof of Lemma 1}
\begin{proof}
Let $\sigma_{i}(\tau_{i}^{\mathbf{X}, \mathbf{Z}})$ be player $i$'s optimal action decision rule after receiving a signal $\tau_{i}^{\mathbf{X}, \mathbf{Z}} \in \mathcal{T}_{i}^{\mathbf{X}, \mathbf{Z}}$.
Suppose to the contrary that player $i$ plays not only pure strategies but also mixed strategies.
Define the following sets
    \begin{align*}
        \mathcal{T}_{i0} &:= \{ \tau_{i}^{\mathbf{X}, \mathbf{Z}} \in \mathcal{T}_{i}^{\mathbf{X}, \mathbf{Z}} \,|\, \sigma_{i} (\tau_{i}^{\mathbf{X}, \mathbf{Z}}) = 0 \}, \\
        \mathcal{T}_{i1} &:= \{ \tau_{i}^{\mathbf{X}, \mathbf{Z}} \in \mathcal{T}_{i}^{\mathbf{X}, \mathbf{Z}} \, | \, \sigma_{i} (\tau_{i}^{\mathbf{X}, \mathbf{Z}}) = 1 \}, \\
        \mathcal{T}_{i,mixed} &:= \{ \tau_{i}^{\mathbf{X}, \mathbf{Z}} \in \mathcal{T}_{i}^{\mathbf{X}, \mathbf{Z}} \,|\, \sigma_{i} (\tau_{i}^{\mathbf{X}, \mathbf{Z}}) \in (0,1) \}.
    \end{align*}
The three sets form a partition of player $i$'s signal space $\mathcal{T}_{i}^{\mathbf{X}, \mathbf{Z}}$.
Due to the information cost, the player does not have an incentive to investigate signal realizations that induce the same action.
For instance, the player does not distinguish $\tau'_{i} \neq \tau_{i}$ such that $\tau_{i}, \tau'_{i} \in \mathcal{T}_{i1}$ and $\sigma_{i}(\tau'_{i}) = \sigma_{i}(\tau_{i}) = 1$.
The effort to process the details of signals incurs information costs but gives no extra benefits.
Thus, player $i$'s signal space is equivalent to a set with at most three elements.

Given the optimal information structure $S_{i}^{\mathbf{X}, \mathbf{Z}}$ and the optimal action decision rule $\sigma_{i}(\cdot)$, the expected payoff under $S_{i}^{\mathbf{X}, \mathbf{Z}}$ excluding the information cost is 
    \begin{equation*}
        \int_{\varepsilon_{i}} \int_{\tau_{i}^{\mathbf{X}, \mathbf{Z}}} U_{i}(Y_{i}, Y_{-i}, \varepsilon_{i}; \mathbf{X}, \mathbf{Z}_{i}) \sigma_{i}(\tau_{i}^{\mathbf{X}, \mathbf{Z}}) \, P_{i}^{\mathbf{X}, \mathbf{Z}}(\tau_{i}^{\mathbf{X}, \mathbf{Z}} \,|\,\varepsilon_{i}) \mathrm{d} \tau_{i}^{\mathbf{X}, \mathbf{Z}} \mathrm{d} F_{i}(\varepsilon_{i}).
    \end{equation*}

Next, consider a new information structure $\Tilde{S}_{i}^{\mathbf{X}, \mathbf{Z}} = (\Tilde{\mathcal{T}}_{i}^{\mathbf{X}, \mathbf{Z}}, \Tilde{P}_{i}^{\mathbf{X}, \mathbf{Z}})$ such that $\Tilde{\mathcal{T}}_{i}^{\mathbf{X}, \mathbf{Z}} = \{t_{0}, t_{1}\}$. 
Let the optimal action decision rule $\Tilde{\sigma}_{i}(\cdot)$ associated with the new information structure $\Tilde{S}_{i}^{\mathbf{X}, \mathbf{Z}}$ be 
    \begin{align*}
        \Tilde{\sigma}_{i}(t_{0}) &= 0\\
        \Tilde{\sigma}_{i}(t_{1}) &= 1.
    \end{align*}

The distribution of the signals are defined by
    \begin{align*}
        \Tilde{P}_{i}^{\mathbf{X}, \mathbf{Z}}(t_{0} \,|\, \varepsilon_{i}) &= \int_{\mathcal{T}_{i0} \cup \mathcal{T}_{i,mixed}} P_{i}^{\mathbf{X}, \mathbf{Z}}(\tau_{i}^{\mathbf{X}, \mathbf{Z}} \,|\, \varepsilon_{i}) \mathrm{d} \tau_{i}^{\mathbf{X}, \mathbf{Z}} \\
        \Tilde{P}_{i}^{\mathbf{X}, \mathbf{Z}}(t_{1} \,|\, \varepsilon_{i}) &= \int_{\mathcal{T}_{i1}} P_{i}^{\mathbf{X}, \mathbf{Z}}(\tau_{i}^{\mathbf{X}, \mathbf{Z}} \,|\, \varepsilon_{i}) \mathrm{d} \tau_{i}^{\mathbf{X}, \mathbf{Z}}
    \end{align*}
for any payoff shock $\varepsilon_{i} \in \mathcal{E}_{i}$.
The expected payoff under $\Tilde{S}_{i}^{\mathbf{X}, \mathbf{Z}}$ excluding the information cost is identical to that under $S_{i}$ since
    \begin{align*}
        & \int_{\varepsilon_{i}} U_{i}(Y_{i},Y_{-i},\varepsilon_{i};\mathbf{X}, \mathbf{Z}_{i}) \Tilde{\sigma}_{i}(t_{0}) \, \Tilde{P}_{i}^{\mathbf{X}, \mathbf{Z}}(t_{0} \,|\,\varepsilon_{i}) \mathrm{d} F_{i}(\varepsilon_{i}) + \int_{\varepsilon_{i}} U_{i}(Y_{i}, Y_{-i},\varepsilon_{i};\mathbf{X}, \mathbf{Z}_{i}) \Tilde{\sigma}_{i}(t_{1}) \, \Tilde{P}_{i}^{\mathbf{X}, \mathbf{Z}}(t_{1} \,|\,\varepsilon_{i}) \mathrm{d} F_{i}(\varepsilon_{i}) \\
        &= \int_{\varepsilon_{i}} \int_{\mathcal{T}_{i0} \cup \mathcal{T}_{i,mixed}} U_{i}(Y_{i}, Y_{-i}, \varepsilon_{i};\mathbf{X}, \mathbf{Z}_{i}) \sigma_{i}(\tau_{i}^{\mathbf{X}, \mathbf{Z}}) \, P_{i}^{\mathbf{X}, \mathbf{Z}}(\tau_{i}^{\mathbf{X}, \mathbf{Z}} \,|\,\varepsilon_{i}) \mathrm{d} \tau_{i}^{\mathbf{X}, \mathbf{Z}} \mathrm{d} F_{i}(\varepsilon_{i}) \\
        &\qquad + \int_{\varepsilon_{i}} \int_{\mathcal{T}_{i1}} U_{i}(Y_{i}, Y_{-i}, \varepsilon_{i}; \mathbf{X}, \mathbf{Z}_{i}) \sigma_{i}(\tau_{i}^{\mathbf{X}, \mathbf{Z}}) \, P_{i}^{\mathbf{X}, \mathbf{Z}}(\tau_{i}^{\mathbf{X}, \mathbf{Z}} \,|\,\varepsilon_{i}) \mathrm{d} \tau_{i}^{\mathbf{X}, \mathbf{Z}} \mathrm{d} F_{i}(\varepsilon_{i}) \\
        &= \int_{\varepsilon_{i}} \int_{\tau_{i}^{\mathbf{X}, \mathbf{Z}}} U_{i}(Y_{i}, Y_{-i}, \varepsilon_{i};\mathbf{X}, \mathbf{Z}_{i}) \sigma_{i}(\tau_{i}^{\mathbf{X}, \mathbf{Z}}) \, P_{i}^{\mathbf{X}, \mathbf{Z}}(\tau_{i}^{\mathbf{X}, \mathbf{Z}} \,|\,\varepsilon_{i}) \mathrm{d} \tau_{i}^{\mathbf{X}, \mathbf{Z}} \mathrm{d} F_{i}(\varepsilon_{i})
    \end{align*}

Notice that the information structure $\Tilde{S}_{i}^{\mathbf{X}, \mathbf{Z}}$ is less informative than $S_{i}^{\mathbf{X}, \mathbf{Z}}$ if $\mathcal{T}_{i,mixed}$ is nonempty or $|\mathcal{T}_{i}| > 2$.
If $\mathcal{T}_{i,mixed}$ is nonempty or $|\mathcal{T}_{i}| > 2$, then $\Tilde{S}_{i}^{\mathbf{X}, \mathbf{Z}}$ does not require player $i$ to process signal realizations within $\mathcal{T}_{i0}$ and $\mathcal{T}_{i,mixed}$.
The mutual information cost function has a property that the more informative it is, the more it costs.
By this property, $\Tilde{S}_{i}^{\mathbf{X}, \mathbf{Z}}$ is cheaper than $S_{i}^{\mathbf{X}, \mathbf{Z}}$, and the player's net expected payoff is greater under the new information structure $\Tilde{S}_{i}^{\mathbf{X}, \mathbf{Z}}$.
This contradicts the optimal information structure $S_{i}^{\mathbf{X}, \mathbf{Z}}$.
Therefore, the player does not choose mixed strategy and the cardinality of the signal space is at most two.
\end{proof}

\subsection{Proof of Proposition 1.}
Let the rival's information strategy be $P_{j}^{\mathbf{X}, \mathbf{Z}}(Y_{j} | \varepsilon_{j})$.
Define the deterministic expected payoff \eqref{eq:deterministicpayoff} by
\begin{equation*}
    EU_{i}(\mathbf{X}, \mathbf{Z}) = \pi_{i}(\mathbf{X}, \mathbf{Z}_{i}) + \delta_{i}(\mathbf{X}, \mathbf{Z}_{i}) P_{j}^{\mathbf{X}, \mathbf{Z}}(Y_{j} | \varepsilon_{j}).
\end{equation*}
I construct the Lagrangian for the expected payoff maximization problem:
    \begin{align*}
        \mathcal{L}(P_{i}) &= \int_{\varepsilon_{i}} \sum_{Y_{i} \in \mathcal{Y}_{i}} \left(EU_{i}(\mathbf{X}, \mathbf{Z}) + \varepsilon_{i} \right)P_{i}^{\mathbf{X}, \mathbf{Z}}(Y_{i} | \varepsilon_{i})  \,\mathrm{d} F(\varepsilon_{i}) \\
        &\; -\lambda_{i} \left[ \int_{\varepsilon_{i}} \left( \sum_{Y_{i} \in \mathcal{Y}_{i}} P_{i}^{\mathbf{X}, \mathbf{Z}}(Y_{i} | \varepsilon_{i}) \log P_{i}^{\mathbf{X}, \mathbf{Z}}(Y_{i} | \varepsilon_{i}) \right) \mathrm{d}F_{i} (\varepsilon_{i}) - \sum_{Y_{i} \in \mathcal{Y}_{i}} P_{i}^{\mathbf{X}, \mathbf{Z}}(Y_{i}) \log P_{i}^{\mathbf{X}, \mathbf{Z}}(Y_{i}) \right] \\
        &\; + \sum_{Y_{i} \in \mathcal{Y}_{i}} \int_{\varepsilon_{i}} \xi (\varepsilon_{i}) P_{i}^{\mathbf{X}, \mathbf{Z}}(Y_{i} | \varepsilon_{i}) \,\mathrm{d}F_{i}(\varepsilon_{i}) - \int_{\varepsilon_{i}} \mu(\varepsilon_{i}) \left( \sum_{Y_{i} \in \mathcal{Y}_{i}} P_{i}^{\mathbf{X}, \mathbf{Z}}(Y_{i} | \varepsilon_{i}) - 1 \right) \,\mathrm{d}F_{i}(\varepsilon_{i})
    \end{align*}
    where 
    \begin{equation*}
        P_{i}^{\mathbf{X}, \mathbf{Z}}(Y_{i}) = \int_{\varepsilon_{i}} P_{i}^{\mathbf{X}, \mathbf{Z}}(Y_{i} | \varepsilon_{i}) \,\mathrm{d}F_{i}(\varepsilon_{i}),
    \end{equation*}
    $\xi (\varepsilon_{i}) \geq 0$ is the Lagrangian multiplier for the constraint:
    \begin{equation*}
        P_{i}^{\mathbf{X}, \mathbf{Z}}(Y_{i} | \varepsilon_{i}) \geq 0, \quad \forall \varepsilon_{i} \in \mathcal{E}_{i}
    \end{equation*}
    and $\mu(\varepsilon_{i})$ is the Lagrangian multiplier for the constraint:
    \begin{equation*}
        \sum_{Y_{i}}P_{i}^{\mathbf{X}, \mathbf{Z}}(Y_{i} | \varepsilon_{i}) = 1, \quad \forall \varepsilon_{i} \in \mathcal{E}_{i}
    \end{equation*}
    If $\int_{\varepsilon_{i}} P_{i}^{\mathbf{X}, \mathbf{Z}}(Y_{i} | \varepsilon_{i}) \,\mathrm{d}F_{i}(\varepsilon_{i}) >0$, the first-order condition with respect to $P_{i}^{\mathbf{X}, \mathbf{Z}}(Y_{i} | \varepsilon_{i})$ is:
    \begin{equation*}
        \left(EU_{i}(\mathbf{X}, \mathbf{Z}) + \varepsilon_{i} \right) - \lambda_{i} \left[ \log P_{i}^{\mathbf{X}, \mathbf{Z}}(Y_{i} | \varepsilon_{i}) + 1 - \log \left( P_{i}^{\mathbf{X}, \mathbf{Z}}(Y_{i}) \right) -1 \right] + \xi (\varepsilon_{i}) - \mu(\varepsilon_{i}) = 0
    \end{equation*}

    Since $\int_{\varepsilon_{i}} P_{i}^{\mathbf{X}, \mathbf{Z}}(Y_{i} | \varepsilon_{i}) \,\mathrm{d}F_{i}(\varepsilon_{i}) >0$ implies $P_{i}^{\mathbf{X}, \mathbf{Z}}(Y_{i} | \varepsilon_{i}) > 0$, the complementary-slackness condition gives $\xi(\varepsilon_{i}) = 0$ and $\mu(\varepsilon_{i}) > 0$.
    Thus I have
    \begin{equation*}
        P_{i}^{\mathbf{X}, \mathbf{Z}}(Y_{i} | \varepsilon_{i}) = P_{i}^{\mathbf{X}, \mathbf{Z}}(Y_{i}) \exp\left( \frac{\left(EU_{i}(\mathbf{X}, \mathbf{Z}) + \varepsilon_{i} \right) - \mu(\varepsilon_{i})}{\lambda_{i}} \right)
    \end{equation*}
    
    Plugging the above expression for $P_{i}^{\mathbf{X}, \mathbf{Z}}(Y_{i} | \varepsilon_{i})$ into the constraint $\sum_{Y_{i}} P_{i}^{\mathbf{X}, \mathbf{Z}}(Y_{i} | \varepsilon_{i}) = 1$ gives 
    \begin{equation}
        P_{i}^{\mathbf{X}, \mathbf{Z}}(Y_{i} | \varepsilon_{i}) = \frac{P_{i}^{\mathbf{X}, \mathbf{Z}}(Y_{i}) \exp\left(EU_{i}(\mathbf{X}, \mathbf{Z}) + \varepsilon_{i}\right)^{1/\lambda_{i}}}{\sum_{Y_{i}'}P_{i}^{\mathbf{X}, \mathbf{Z}}(Y_{i}')\exp\left( EU_{i}(\mathbf{X}, \mathbf{Z}) + \varepsilon_{i}\right)^{1/\lambda_{i}}}
    \end{equation}
    
\subsection{Proof of Proposition 2.}
\begin{proof}
    Let $\mathbf{P}(\mathbf{X}, \mathbf{Z}) \equiv (P_{1}(\mathbf{X}, \mathbf{Z}_{1}, \mathbf{Z}_{2}), P_{2}(\mathbf{X}, \mathbf{Z}_{2}, \mathbf{Z}_{1})) \in [0,1]^{2}$.
    I define the fixed point mapping $\Phi:[0,1]^{2} \to [0,1]^{2}$
        \begin{equation*}
            \Phi(\mathbf{P}(\mathbf{X}, \mathbf{Z}),\mathbf{X}, \mathbf{Z}) = 
            \begin{pmatrix}
                \int_{\varepsilon_{1}} \frac{ P_{1}(\mathbf{X}, \mathbf{Z})\exp\left(EU_{1}(\mathbf{X}, \mathbf{Z}) + \varepsilon_{1}\right)^{1/\lambda_{1}}}{P_{1}(\mathbf{X}, \mathbf{Z})\exp\left(EU_{1}(\mathbf{X}, \mathbf{Z}) + \varepsilon_{1}\right)^{1/\lambda_{1}} + 1 - P_{1}(\mathbf{X}, \mathbf{Z})} \mathrm{d}F_{1}(\varepsilon_{1}) \\
                \int_{\varepsilon_{2}} \frac{ P_{2}(\mathbf{X}, \mathbf{Z})\exp\left(EU_{2}(\mathbf{X}, \mathbf{Z}) + \varepsilon_{2}\right)^{1/\lambda_{2}}}{P_{2}(\mathbf{X}, \mathbf{Z})\exp\left(EU_{2}(\mathbf{X}, \mathbf{Z}) + \varepsilon_{2}\right)^{1/\lambda_{2}} + 1 - P_{2}(\mathbf{X}, \mathbf{Z})} \mathrm{d}F(\varepsilon_{2})
            \end{pmatrix}.
        \end{equation*}
    Given Assumption 1, $\Phi$ satisfies the conditions for Brouwer's Fixed Point Theorem.
    Therefore, there exists a fixed point $(P_{1}(\mathbf{X}, \mathbf{Z}_{1}, \mathbf{Z}_{2}), P_{2}(\mathbf{X}, \mathbf{Z}_{2}, \mathbf{Z}_{1})) \equiv\mathbf{P}(\mathbf{X}, \mathbf{Z}) \in (0,1)^{2}$ such that solves  
        \begin{equation*}
            \Phi(\mathbf{P}(\mathbf{X}, \mathbf{Z}),\mathbf{X}, \mathbf{Z}) = \mathbf{P}(\mathbf{X}, \mathbf{Z}).
        \end{equation*}
\end{proof}

\subsection{Proof of Proposition 4.}
\begin{proof}
Fix an arbitrary realization of player $i$'s player-specific variable $\mathbf{Z}_{i} = \mathbf{z}_{i}$ and market-specific variable $\mathbf{X} = \mathbf{x}$.
By Assumption 3, the rival's player-specific variable $\mathbf{Z}_{j}$ has continuous variation conditional on $\mathbf{Z}_{i} = \mathbf{z}_{i}$.
I can find two realizations of $\mathbf{Z}_{j}$, and say $\mathbf{z}_{j}^{1}$ and $\mathbf{z}_{j}^{2}$.  
The equation \eqref{eq:semiparam-id} for these realizations leads to the following system of equations
\begin{align*}
    G_{i}^{-1}(P_{i}(\mathbf{x}, \mathbf{z}_{i}, \mathbf{z}_{j}^{1})) &= \pi_{i}(\mathbf{x}, \mathbf{z}_{i}) + \delta_{i}(\mathbf{x}, \mathbf{z}_{i}) P_{j}(\mathbf{x}, \mathbf{z}_{j}^{1}, \mathbf{z}_{i}) \\
    G_{i}^{-1}(P_{i}(\mathbf{x}, \mathbf{z}_{i}, \mathbf{z}_{j}^{2})) &= \pi_{i}(\mathbf{x}, \mathbf{z}_{i}) + \delta_{i}(\mathbf{x}, \mathbf{z}_{i}) P_{j}(\mathbf{x}, \mathbf{z}_{j}^{2}, \mathbf{z}_{i}) 
\end{align*}
By the Assumption 4, both values of $P_{i}(\mathbf{x}, \mathbf{z}_{i}, \mathbf{z}_{j}^{1})$ and $P_{i}(\mathbf{x}, \mathbf{z}_{i}, \mathbf{z}_{j}^{2})$ are known to the econometrician.
Moreover, by the Assumption 1, the function $G_{i}^{-1}(\cdot)$ can be derived by the econometrician.
The above system of equations contains two equations with two unknowns, $\pi_{i}(\mathbf{x}, \mathbf{z}_{i})$ and $\delta_{i}(\mathbf{x}, \mathbf{z}_{i})$.

Assumptions 3 and 4 imply that $P_{j}(\mathbf{x}, \mathbf{z}_{j}^{1}, \mathbf{z}_{i}) \neq P_{j}(\mathbf{x}, \mathbf{z}_{j}^{2}, \mathbf{z}_{i})$, and the rank condition for the above system of equations is satisfied.
Therefore, the unknown functions $\pi_{i}(\mathbf{x}, \mathbf{z}_{i})$ and $\delta_{i}(\mathbf{x}, \mathbf{z}_{i})$ are identified.
Since the choice of $\mathbf{z}_{i}$ and $\mathbf{x}$ was arbitrary, the functions $\pi_{i}(\mathbf{X}, \mathbf{Z}_{i})$ and $\delta_{i}(\mathbf{X}, \mathbf{Z}_{i})$ are identified for all values of $\mathbf{Z}_{i}$ and $\mathbf{X}$ in its support.

\end{proof}

\subsection{Proof of Proposition 5.}
\begin{proof}
For simplicity, suppose that $\pi_{i}(\mathbf{Z}_{i}) = \mathbf{Z}_{i}\beta_{i}$ and $\delta_{i}(\mathbf{Z}_{i}) = \delta_{i}$.
The parameter vector of interest is $\theta = (\beta_{1}, \beta_{2}, \delta_{1}, \delta_{2})$.
Assume that $Z_{ik}$ has an everywhere positive Lebesgue density conditional on $\mathbf{Z}_{i,-k}$ and the matrices $\mathbf{Z}_{i}$ and $\mathbf{Z}_{j}$ have full column rank.

Without loss of generality, assume $\beta_{2k} > 0$, and let $Z_{2k} $ be small enough conditional on $\mathbf{Z}_{2,-k}$.
Consequently, $\beta_{2} Z_{2k} \approx -\infty$, and player 2's expected deterministic payoff of choosing $Y_{2}=1$ is
\begin{equation*}
    EU_{2}(\mathbf{Z}_{2}, \mathbf{Z}_{1}; \theta) = \mathbf{Z}_{2}\beta_{2} + \delta_{2} \cdot P_{1}(\mathbf{Z}_{1}, \mathbf{Z}_{2}; \theta) \approx -\infty.
\end{equation*}
The conditional choice probability for player 2 is 
\begin{align*}
    P_{2}(\mathbf{Z}_{2},\mathbf{Z}_{1}; \theta) &= \int_{\varepsilon_{2}} \frac{P_{2}(\mathbf{Z}_{2},\mathbf{Z}_{1}; \theta) \exp\left( EU_{2}(\mathbf{Z}_{2}, \mathbf{Z}_{1}; \theta) + \varepsilon_{2}\right)^{1/\lambda_{2}}}{P_{2}(\mathbf{Z}_{2},\mathbf{Z}_{1}; \theta) \exp\left((EU_{2}(\mathbf{Z}_{2}, \mathbf{Z}_{1}; \theta) + \varepsilon_{2}) \right)^{1/\lambda_{2}} + (1 - P_{2}(\mathbf{Z}_{1},\mathbf{Z}_{2}; \theta))}  \, \mathrm{d}F_{2}(\varepsilon_{2}) \\
    &\approx \int_{\varepsilon_{2}} \frac{P_{2}(\mathbf{Z}_{2},\mathbf{Z}_{1}; \theta) \cdot 0}{P_{2}(\mathbf{Z}_{2},\mathbf{Z}_{1}; \theta) \cdot 0 + (1 - P_{2}(\mathbf{Z}_{2},\mathbf{Z}_{1}; \theta))}  \, \mathrm{d}F_{2}(\varepsilon_{2}) = 0
\end{align*}
as $EU_{2}(\mathbf{Z}_{2}, \mathbf{Z}_{1}; \theta) \approx -\infty$.
Thus, player 2 will choose the action $Y_{2}=0$ for all values of $Y_{1}$ and $\mathbf{Z}_{1}$.

Given that $P_{2}(\mathbf{Z}_{2},\mathbf{Z}_{1}; \theta) \approx 0$ as $EU_{2}(\mathbf{Z}_{2}, \mathbf{Z}_{1}; \theta) \approx -\infty$, player 1's expected deterministic payoff of choosing $Y_{1}=1$ is 
\begin{equation*}
    EU_{1}(\mathbf{Z}_{1}, \mathbf{Z}_{2}; \theta) = \mathbf{Z}_{1}\beta_{1} + \delta_{1} \cdot P_{2}(\mathbf{Z}_{2}, \mathbf{Z}_{1}; \theta) = \mathbf{Z}_{1}\beta_{1},
\end{equation*}
and the conditional choice probability is
\begin{equation*}
    P_{1}(\mathbf{Z}_{1},\mathbf{Z}_{2}; \theta) = \int_{\varepsilon_{1}} \frac{P_{1}(\mathbf{Z}_{1},\mathbf{Z}_{2}; \theta) \exp\left( \mathbf{Z}_{1}\beta_{1} + \varepsilon_{1} \right)^{1/\lambda_{1}}}{P_{1}(\mathbf{Z}_{1},\mathbf{Z}_{2}; \theta) \exp\left(\mathbf{Z}_{1}\beta_{1} + \varepsilon_{1} \right)^{1/\lambda_{1}} + (1 - P_{1}(\mathbf{Z}_{1},\mathbf{Z}_{2}; \theta))}  \, \mathrm{d}F_{1}(\varepsilon_{1})
\end{equation*}

Next, I claim that $P_{1}(1\,|\,\mathbf{Z}_{1},\mathbf{Z}_{2}; \theta) \neq P_{1}(1\,|\,\mathbf{Z}_{1},\mathbf{Z}_{2}; b)$ for $b \neq \theta$.

Suppose to the contrary $P_{1}(\mathbf{Z}_{1},\mathbf{Z}_{2}; \theta) = P_{1}(\mathbf{Z}_{1},\mathbf{Z}_{2}; b) \in (0,1)$.
The full rank condition on $\mathbf{Z}_{1}$ guarantees $\mathbf{Z}_{1}\beta_{1} \neq \mathbf{Z}_{1} b_{1}$ for $b_{1} \neq \beta_{1}$ and the exponential function is strictly monotonic.
It follows that 
\begin{equation*}
    \exp\left( \frac{1}{\lambda_{1}} (\mathbf{Z}_{1}\beta_{1}) \right) \neq \exp\left( \frac{1}{\lambda_{1}} (\mathbf{Z}_{1} b_{1}) \right).
\end{equation*}
Without loss of generality, assume $\mathbf{Z}_{1}\beta_{1} > \mathbf{Z}_{1} b_{1}$ for $b_{1} \neq \beta_{1}$ and $\exp\left( \frac{1}{\lambda_{1}} (\mathbf{Z}_{1}\beta_{1}) \right) > \exp\left( \frac{1}{\lambda_{1}} (\mathbf{Z}_{1} b_{1}) \right).$

Let the function $I(x)$ be defined by
\begin{equation*}
    I(x) = \frac{p \exp\left( \frac{1}{\lambda} (x + \varepsilon) \right)}{p \exp\left( \frac{1}{\lambda} (x + \varepsilon) \right) + (1 - p)}
\end{equation*}

This function is strictly increasing in $x$ since the derivative with respect to $x$ is always positive:
\begin{equation*}
    \frac{\mathrm{d} I}{\mathrm{d} x} = \frac{ \frac{1}{\lambda}p (1-p) \exp\left( \frac{1}{\lambda} (x + \varepsilon) \right)}{[p \exp\left( \frac{1}{\lambda} (x + \varepsilon) \right) + (1 - p)]^2}>0.
\end{equation*}

Thus I have
\begin{align*}
    P_{1}(\mathbf{Z}_{1},\mathbf{Z}_{2}; \theta) &= \int_{\varepsilon_{1}} \frac{P_{1}(\mathbf{Z}_{1},\mathbf{Z}_{2}; \theta) \exp\left( \frac{1}{\lambda_{1}} (\mathbf{Z}_{1}\beta_{1} + \varepsilon_{1}) \right)}{P_{1}(\mathbf{Z}_{1},\mathbf{Z}_{2}; \theta) \exp\left( \frac{1}{\lambda_{1}} (\mathbf{Z}_{1}\beta_{1} + \varepsilon_{1}) \right) + (1 - P_{1}(\mathbf{Z}_{1},\mathbf{Z}_{2}; \theta))}  \, \mathrm{d}F_{1}(\varepsilon_{1}) \\
    &= \int_{\varepsilon_{1}} \frac{P_{1}(\mathbf{Z}_{1},\mathbf{Z}_{2}; b) \exp\left( \frac{1}{\lambda_{1}} (\mathbf{Z}_{1} \beta_{1} + \varepsilon_{1}) \right)}{P_{1}(\mathbf{Z}_{1},\mathbf{Z}_{2}; b) \exp\left( \frac{1}{\lambda_{1}} (\mathbf{Z}_{1} \beta_{1} + \varepsilon_{1}) \right) + (1 - P_{1}(\mathbf{Z}_{1},\mathbf{Z}_{2}; b))}  \, \mathrm{d}F_{1}(\varepsilon_{1}) \\
    &> \int_{\varepsilon_{1}} \frac{P_{1}(\mathbf{Z}_{1},\mathbf{Z}_{2}; b) \exp\left( \frac{1}{\lambda_{1}} (\mathbf{Z}_{1} b_{1} + \varepsilon_{1}) \right)}{P_{1}(\mathbf{Z}_{1},\mathbf{Z}_{2}; b) \exp\left( \frac{1}{\lambda_{1}} (\mathbf{Z}_{1} b_{1} + \varepsilon_{1}) \right) + (1 - P_{1}(\mathbf{Z}_{1},\mathbf{Z}_{2}; b))}  \, \mathrm{d}F_{1}(\varepsilon_{1}) \\
    &= P_{1}(\mathbf{Z}_{1},\mathbf{Z}_{2}; b)
\end{align*}

Since I have assumed $\mathbf{Z}_{1}\beta_{1} > \mathbf{Z}_{1} b_{1}$ for $b_{1} \neq \beta_{1}$, it follows that $P_{1}(\mathbf{Z}_{1},\mathbf{Z}_{2}; \beta) > P_{1}(\mathbf{Z}_{1},\mathbf{Z}_{2}; b)$.
Therefore, the assumption that $P_{1}(\mathbf{Z}_{1},\mathbf{Z}_{2}; \beta) = P_{1}(\mathbf{Z}_{1},\mathbf{Z}_{2}; b)$ leads to a contradiction.
Hence, $P_{1}(\mathbf{Z}_{1},\mathbf{Z}_{2}; \beta) \neq P_{1}(\mathbf{Z}_{1},\mathbf{Z}_{2}; b)$ for $b \neq \beta$, and this result identifies $\beta_{1}$.

As for the identification of $\beta_{2}$, assume $\beta_{1k}>0$, and choose $Z_{1k}$ small enough such that $Z_{1k} \beta_{1k} \approx -\infty$.
$\beta_{2}$ is identified using the similar argument.

As for the identification of $\delta_{i}$ for $i=1,2$, let $Z_{ik} \approx \infty$.
$\delta_{1}$ and $\delta_{2}$ are identified using the similar argument.
\end{proof}


\end{appendix}

\end{document}